\newtheorem{theorem}{Theorem}
\newtheorem{lemma}{Lemma}
\newtheorem{proposition}{Proposition}
\newtheorem{corollary}{Corollary}
\newtheorem{definition}{Definition}
\newcommand{\bc}{\mathbf{c}}
\newcommand{\bh}{\mathbf{h}}
\newcommand{\bW}{\mathbf{W}}
\newcommand{\bv}{\mathbf{v}}
\newcommand{\bI}{\mathbf{I}}
\newcommand{\bw}{\mathbf{w}}
\newcommand{\bx}{\mathbf{x}}
\newcommand{\bA}{\mathbf{A}}
\newcommand{\bB}{\mathbf{B}}
\newcommand{\bH}{\mathbf{H}}
\begin{document}

\title{An Efficient Global Algorithm for Single-Group Multicast Beamforming}

\author{ \IEEEauthorblockN{Cheng Lu, Ya-Feng Liu, \emph{Member, IEEE}}
\thanks{{This work is partially supported by the National Natural
Science Foundation of China under Grants 11671419, 11571221, 11331012, and 11631013}. C. Lu is with the School of Economics and Management, North China Electric Power University, Beijing, 102206, China
(e-mail: lucheng1983@163.com). Y.-F. Liu is with the State Key Laboratory
of Scientific and Engineering Computing, Institute of Computational
Mathematics and Scientific/Engineering Computing, Academy of
Mathematics and Systems Science, Chinese Academy of Sciences,
Beijing, 100190, China (e-mail: yafliu@lsec.cc.ac.cn).}
}


\maketitle

\begin{abstract}
Consider the single-group multicast beamforming problem, where multiple users receive the same data stream simultaneously from a single transmitter. The problem is NP-hard and all existing algorithms for the problem either find suboptimal approximate or local stationary solutions. In this paper, we propose an efficient branch-and-bound algorithm for the problem that is guaranteed to find its global solution. To the best of our knowledge, our proposed algorithm is the first tailored global algorithm for the single-group multicast beamforming problem. Simulation results show that our proposed algorithm is computationally efficient (albeit its theoretical worst-case iteration complexity is exponential with respect to the number of receivers) and it significantly outperforms a state-of-the-art general-purpose global optimization solver called Baron. Our proposed algorithm provides an important benchmark for performance evaluation of existing algorithms for the same problem. By using it as the benchmark, we show that two state-of-the-art algorithms, semidefinite relaxation algorithm and successive linear approximation algorithm, work well when the problem dimension (i.e., the number of antennas at the transmitter and the number of receivers)  is small but their performance deteriorates quickly as the problem dimension increases.

\end{abstract}

\begin{IEEEkeywords}
Argument cuts, branch-and-bound algorithm, convex relaxation, multicasting, global optimality, transmit beamforming. 
\end{IEEEkeywords}

\IEEEpeerreviewmaketitle

\section{Introduction}

Physical layer multicasting via transmit beamforming has been recognized as a powerful technique for efficient audio and video streaming in multi-user multi-antenna wireless communication networks. For instance, multicast beamforming is a part of the Evolved Multimedia Broadcast Multicast Service (eMBMS) in
the Long-Term Evolution (LTE) standard, {commercially known as LTE Broadcast \cite{LTEB}}. Multicast beamforming exploits channel state information at the transmitter and utilizes multiple transmit antennas to broadcast common information to a preselected group of users.

%
%
%

One scenario of particular interest in this paper is single-group multicast beamforming, where all users receive the same data stream from the transmitter and the data rate is determined by the minimum received signal-to-noise-ratio (SNR). The earliest mathematical formulation of the single-group multicast beamforming problem is to maximize the average SNR subject to the total transmission power constraint \cite{Lopez}.
This formulation is simple, and can be solved efficiently. However, the solution that maximizes the average SNR does not consider the SNR
of each individual user, so that the minimum received SNR may be significantly lower than the average SNR. Hence, the solution obtained by maximizing the average SNR
does not always achieve a satisfactory common data rate. To overcome this drawback, reference \cite{Sidiropoulos1} proposed two new problem formulations, the quality of service (QoS) constrained problem formulation and the max-min fairness problem formulation, where the former one minimizes the total transmission power subject to SNR constraints of all receivers and the latter one maximizes the minimum SNR among all users subject to the total transmission power constraint. These two new formulations can guarantee QoS of each user and are shown to be equivalent from an optimization point of view \cite{Sidiropoulos1}.
Unfortunately, the two problems are NP-hard in general \cite{Sidiropoulos} and thus it is impossible to solve them to global optimality in polynomial time (unless P=NP) \cite{Garey}. {This is sharply different from the downlink unicast beamforming problem \cite{Schubert,yu2007transmitter,wiesel2006linear} and the uplink/downlink coordinated beamforming problem \cite{dahrouj2010coordinated,liu2013maxsimo,liu11tspbeamforming}, which can be equivalently reformulated as a convex problem and can be efficiently solved to global optimality such as by using the uplink-downlink duality theory.}

%

Various algorithms have been proposed to solve the single-group multicast beamforming problem; see \cite{Sidiropoulos,Tran,Demir,Gopalakrishnan,Lozano,Matskani,Abdelkader,kim2011optimal,konar2016fast,Gopalakrishnan1,Xu,noam2013one,huang2012robust,wen2012rank,schad2012convex,wu2012rank,wu2013physical} and references therein. Based on our knowledge, these algorithms are either convex \emph{relaxation} based algorithms or convex \emph{approximation} based algorithms. Due to the NP-hardness of the problem, none of them {are guaranteed to} find the global solution of the problem (except for very special problem instances \cite{kim2011optimal,Luo2010}). Moreover, since there might exist large gaps between the original problem and its convex relaxations/approximations, the quality of the returned {solutions} by the aforementioned algorithms based on these convex relaxations/approximations might be poor.

The goal of this paper is to develop a computationally efficient global algorithm for solving the single-group multicast beamforming problem that is guaranteed to find the global solution of the problem. 
%
%
%
%

\subsection{Related Works}

One of the state-of-the-art algorithms for the single-group multicast beamforming problem is the semidefinite relaxation (SDR) algorithm \cite{Sidiropoulos}. The main observation behind the SDR algorithm is that the beamforming problem can be equivalently formulated as a rank-one constrained semidefinite program (SDP). The SDR algorithm drops the rank-one constraint, solves the SDR, and then applies a Gaussion randomization strategy to generate a rank-one approximate solution of the original problem based on the obtained solution of the SDR. The SDR algorithm is capable of finding high quality approximate solutions when the number of antennas at the transmitter and the number of
users are small. However, the performance of the SDR algorithm deteriorates quickly as the number of users increases. As justified in \cite{luo2007approximation}, the provable (worst-case) approximation accuracy of the solution returned by the SDR algorithm degrades linearly with the number of users.

One of the best algorithms for the problem of interest is the successive linear approximation (SLA) algorithm \cite{Tran}. At each iteration, the SLA algorithm approximates the original nonconvex problem by using its first-order Talor series expansion at the current iterate and then solves the resulting convex quadratic program to generate the next iterate.
It has been shown in \cite{Tran} that the sequence of points generated by the SLA algorithm converges to a {Karush-Kuhn-Tucker (KKT)} point of the original beamforming problem. Moreover, numerical simulation results in \cite{Tran} have shown that the SLA algorithm performs better than the SDR algorithm {(in the sense that the SLA algorithm finds a better solution with a lower objective value than the SDR algorithm).}

Recently, an alternating maximization (AM) algorithm for the single-group multicast beamforming problem was proposed in \cite{Demir}. Recall that the beamforming problem is equivalent to an SDP with a rank-one constraint \cite{Sidiropoulos}. The main contribution of \cite{Demir} is that it reformulated the problem as a nonconvex problem without the rank-one constraint, which is naturally amenable to AM. Numerical simulation results in \cite{Demir} show that the AM algorithm performs better than the SDR algorithm. However, further simulation results in \cite{Gopalakrishnan} show that the AM algorithm performs worse and has higher complexity than the SLA algorithm.





We also mention some interesting related works, which extend the previously mentioned works \cite{Sidiropoulos,Tran,Demir} on the single-group multicast beamforming problem from different aspects. One extension is to develop more efficient algorithms for solving the problem when computational efficiency is a big issue, i.e., when the channel changes fast over time. Along this direction, low-complexity algorithms \cite{Lozano,Matskani,Abdelkader,kim2011optimal,konar2016fast} as well as adaptive (online) algorithms (which learn the channel correlation matrices) \cite{Gopalakrishnan1,Xu,noam2013one,huang2012robust} have been proposed. These algorithms generally are very fast, but their performance is worse than the ones of the SDR algorithm \cite{Sidiropoulos} and the SLA algorithm \cite{Tran}. Recently, a new adaptive multiplicative update (MU) algorithm and a hybrid MU-SLA algorithm based on it have been proposed in \cite{Gopalakrishnan}. The MU-SLA algorithm runs the SLA algorithm for only one iteration with the returned point by the MU algorithm as the initial point.


Another extension is to develop new physical-layer transmit strategies instead of using the beamforming transmit strategy. Along this direction, a beamformed Alamouti scheme has been independently proposed in \cite{wen2012rank,schad2012convex,wu2012rank}, which can be seen as a rank-two generalization of the previous (rank-one) SDR beamforming framework. It has been shown in \cite{wu2013physical} that the worst-case
approximation accuracy of the beamformed Alamouti scheme
degrades only at a rate of the square root of the number of the users. This improves over the beamforming strategy, where the approximation accuracy degrades at a rate of the number of users \cite{luo2007approximation}.

\subsection{Our Contributions}

This is the first paper that proposes a tailored efficient global algorithm for solving the single-group multicast beamforming problem, which is in sharp contrast to all existing works that focus on the design of approximation algorithms or local optimization algorithms. Our proposed algorithm is based on the branch-and-bound strategy combined with a new argument cut technique. {The argument cuts (in Definition \ref{definition-argcut}) are used to design effective convex relaxations of nonconvex constraints in the single-group multicast beamforming problem and therefore play an important role in our proposed branch-and-bound algorithm for solving the problem.}

Since the single-group multicast beamforming problem is NP-hard, there does not exist a polynomial time algorithm which can solve it to global optimality (unless P=NP) \cite{Garey}. Therefore, our proposed algorithm has an exponential worst-case iteration complexity with respect to the number of users (see Theorem \ref{thm-complexity}). However, our simulation results show that our proposed algorithm is highly efficient and it significantly outperforms the state-of-the-art general-purpose global optimization solver Baron \cite{Tawarmalani,baron2}. The high efficiency of our proposed algorithm is mainly due to the new argument cuts. More specifically, our proposed algorithm can globally solve small-scale problem instances with $2$ antennas at the transmitter and $8$ users within $0.3$ seconds on average while Baron needs $110.4$ seconds; our proposed algorithm can solve median-scale problem instances with $4$ antennas at the transmitter and $8$ users within $2.8$ seconds on average while Baron fails to solve the same problem instances within $10$ minutes. 

Even though in some scenarios (i.e., both the number of antennas at the transmitter and the number of users are large) it takes our proposed algorithm a (relatively) large number of iterations (and thus long time) to find the global solution, our proposed algorithm is still attractive from the following two aspects. First, when the channel changes slowly with time and computational efficiency is not an issue, our proposed algorithm is capable of finding the global solution and therefore provides (potentially much) better performance compared to the existing algorithms. Second, when the channel changes fast with time and computational efficiency becomes an issue in this case, our proposed algorithm can serve as a benchmark to evaluate the performance of the existing heuristic/local optimization algorithms. 
Our simulation results show that the relative gaps between the objective values at the solutions returned by the SDR algorithm and the SLA algorithm and the optimal objective value (achieved by our proposed algorithm) exceed 50\% for some problem instances with a (relatively) large number of antennas at the transmitter and a (relatively) large number of receivers.

\subsection{Organization and Notations}

The organization of this paper is as follows. In Section II, we introduce the QoS constrained formulation of the single-group multicast beamforming problem, and
review two state-of-the-art algorithms for solving the problem.
In Section III, we propose a global branch-and-bound algorithm for solving the problem\footnote{The MATLAB codes of the proposed algorithm are available at [https://www.dropbox.com/s/safrgm97emdgyl9/ARC-BB.rar?dl=0].}. Simulation results are presented in Section IV to illustrate the efficiency of our proposed algorithm and the paper is concluded in Section V.

%

We adopt the following notations in this paper. We use lowercase boldface and uppercase boldface letters to denote (column) vectors and matrices, respectively.
We use  
$\mathbb{C}$ ($\mathbb{R}$) to denote the complex (real) domain, 
$\mathbb{C}^n$  ($\mathbb{R}^n$) to denote the set of
the $n$-dimensional complex (real) column vectors, and $\mathbb{C}^{m\times n}$ to
denote the set of $m\times n$ complex matrices. {For a given complex number $c,$ the notations $\mathrm{Re}(c),~\mathrm{Im}(c),~{|c|},$ and $\arg{(c_i)}$ stand for its real part, its imaginary part, {its modulus}, and its argument, respectively.
For a given (complex) vector $\bx$, $\|\bx\|$ denotes its Euclidean norm, $\bx^H$ denotes its Hermitian, and $\bx^T$ denotes its transpose. The same notations $\bA^H$ and $\bA^T$ also apply to the matrix $\bA.$ For a given complex Hermitian matrix $\bA$, $\bA\succeq \mathbf{0}$ means $\bA$ is positive semidefinite, $\textrm{Trace}(\bA)$ denotes the trace of the matrix $\bA$, and $\textrm{Rank}(\bA)$ denotes the rank of the matrix $\bA$. For two given Hermitian matrices $\bA$ and $\bB$, $\bA\succeq \bB$ means $\bA-\bB\succeq \mathbf{0}$. Finally, we use $\mathbf{i}$ to denote the imaginary unit which satisfies the equation $\mathbf{i}^2 = -1$ and use $\bI_N$ to denote the $N\times N$ identity matrix. 

\section{Problem Formulation and Review}
In this section, we first introduce the QoS constrained formulation of the single-group multicast beamforming problem and then review two state-of-the-art algorithms for solving the problem. 

\subsection{Problem Formulation}
Consider the single-group multicast beamforming problem for the multi-user multi-input single-output (MISO)
downlink channel, where the base station (transmitter) is equipped with $N$ antennas,
and broadcasts a common data stream to $M$ users with a single antenna. Let $\bh_k \in \mathbb{C}^N$
denote the channel vector between the base station and the $k$-th receiver and let $\bw\in \mathbb{C}^N$ denote the beamforming vector used by the base station. Assume that $s(t)$ {is} the broadcasting
data stream. Then the transmitted signal by the base station is given by $s(t) \bw$ and the received signal
at the $k$-th receiver is given by $$y_k(t)=s(t)\bh_k^H\bw +n_k(t),$$ where $n_k(t)$ is the additive white Gaussian noise (AWGN) with
variance $\sigma_k^2$.
Assume that the transmitted signal $s(t)$ has a unit power. Then, the SNR of the $k$-th user can be written as
$$\textrm{SNR}_k=\frac{|\bh_k^H \bw|^2}{\sigma_k^2},~k=1,2,\ldots,M.$$

This paper is interested in minimizing the total transmission power at the base station while satisfying the SNR constraints of all users.
Mathematically, the problem can be formulated as follows:
\begin{equation}\begin{array}{cl}
\displaystyle \min_{\bw} & \|\bw\|^2 \\[10pt]
\mbox{s.t.} & \displaystyle \frac{|\bh_k^H\bw|^2}{\sigma_k^2} \geq \gamma_k,~k=1,2,...,M,
\end{array}\end{equation}
where $\gamma_k$ is the desired transmission SNR target of user $k.$
Let $\tilde{\bh}_k=\bh_k/\sqrt{\gamma_k\sigma_k^2}$, then {$$\frac{|\bh_k^H\bw|^2}{\sigma_k^2} \geq \gamma_k\Longleftrightarrow |\tilde{\bh}_k^H\bw|^2 \geq 1.$$}For ease of notation, we drop the \~  {} and study the following single-group multicast beamforming problem in this paper:
\begin{equation}\begin{array}{cl}
\displaystyle\min_{\bw} & \|\bw\|^2 \tag{P} \\[10pt]
\mbox{s.t.} &{ \displaystyle|\bh_k^H\bw|^2\geq 1},~k=1,2,\ldots,M.
\end{array}\end{equation}

A closely related problem is to maximize the minimum SNR among all users subject to the total transmission power constraint:
\begin{equation}\label{max-min}\begin{array}{cl}
\displaystyle\max_{\bw} & \displaystyle\min_{k=1,2,\ldots,M}\left\{|\bh_k^H\bw|^2\right\} \\[10pt]
\mbox{s.t.} & \displaystyle\|\bw\|^2\leq 1.
\end{array}\end{equation} It has been shown in \cite{Sidiropoulos1} that problems (P) and \eqref{max-min} are equivalent from an optimization perspective of view, i.e., the global solution of problem (P) can be obtained by appropriately scaling the global solution of problem \eqref{max-min}, and vice versa. Therefore, we focus on problem (P) in this paper.


\subsection{Review of Two State-of-the-Art Algorithms}

In this subsection, we briefly review two state-of-the-art algorithms for solving problem (P). We shall compare our proposed algorithm with these two algorithms later in Section IV.

One of the state-of-the-art algorithm for solving problem (P) is the SDR algorithm \cite{Sidiropoulos}. The SDR algorithm is based on the SDR technique, which has been widely used to solve optimization problems arising from signal processing and wireless communications (see \cite{Luo2010} and references therein).
The main observation in the SDR algorithm is that the constraint {$|\bh_k^H\bw|^2\geq 1$} can be equivalently
rewritten as $$\textrm{Trace}(\bH_k \bW)\geq 1,~\bW\succeq\mathbf{0},~\text{and}~\textrm{Rank}(\bW)=1,$$
{where $\bH_k= \bh_k \bh_k^H\in \mathbb{C}^{N\times N}$ and $\bW = \bw \bw^H\in \mathbb{C}^{N\times N}.$} By dropping the rank-one constraint, problem (P) is relaxed to
\begin{equation}\label{SDR}\begin{array}{cl}
\displaystyle\min_{\bW} &\textrm{Trace}(\bW) \\[5pt]
\mbox{s.t.} &\displaystyle\textrm{Trace}(\bH_k \bW)\geq 1,~k=1,\ldots,M,\\[5pt]
&\bW\succeq \mathbf{0}.
\end{array}\end{equation}
The above SDR problem \eqref{SDR} can be solved in polynomial time by using the interior-point algorithm \cite{Ben-Tal}. If the optimal solution
$\bW^*\succeq \mathbf{0}$ of the SDR problem \eqref{SDR} is of rank one, i.e., $\bW^*$ admits the decomposition $\bW^*={\bw^*}(\bw^*)^H,$ then ${\bw^*}$ is a global solution of problem (P).
However, the solution $\bW^*$ of the SDR problem \eqref{SDR} is not always of rank one and thus the global solution of problem (P) might not be obtained. In this case, the SDR algorithm employs the Gaussian randomization techniques to randomly generate approximate solutions based on $\bW^*,$ then scales the approximate solutions to satisfy all SNR constraints, and finally picks the one that has the smallest norm as the final solution. The SDR algorithm can find high quality approximate solutions when the problem dimension, especially the number of users, is small. The worst-case approximation accuracy of the SDR algorithm was shown in \cite{Sidiropoulos,luo2007approximation}.
%

%
%
%
%


Another state-of-the-art algorithm for solving problem (P) is the SLA algorithm \cite{Tran}. The basic idea of the SLA algorithm is to approximate the nonconvex constraint {$|\bh_k^H\bw|^2\geq 1$} by a linear constraint. Specifically, the SLA algorithm introduces the auxiliary variables $$\bv_k:=\left[\mathrm{Re}(\bh_k^H\bw), \mathrm{Im}(\bh_k^H\bw)\right]^T,~k=1,2,\ldots,M.$$ 
Given the current point $\left\{\bv_k^n\in \mathbb{R}^2\right\}$ (at the $n$-th iteration), the SLA algorithm first approximates the nonconvex constraint  {$|\bh_k^H\bw|^2\geq 1$} by the linear constraint
$$\|\bv_k^n\|^2 + 2 (\bv_k^n)^T(\bv_k-\bv_k^n) \geq 1,$$
and then solves the following approximation problem to obtain the next iterate $\left\{\bv_k^{n+1}\right\}:$
\begin{equation}\label{LCQP}\begin{array}{cl}
\displaystyle \min_{\bw,\,\bv}&\|\bw\|^2\\[5pt]
\mbox{s.t.}& \displaystyle\|\bv_k^n\|^2 + 2 (\bv_k^n)^T(\bv_k-\bv_k^n) \geq 1,~k=1,...,M,\\[5pt]
&\displaystyle\bv_k=\left[\mathcal{R}(\bh_k^H\bw), \mathcal{I}(\bh_k^H\bw)\right]^T,~k=1,...,M,
\end{array}\end{equation} where $\bv$ is a collection of $\left\{\bv_k\right\}_{k=1}^M.$ 
The convergence of the SLA algorithm to a KKT point has been established in \cite{Tran}.
It is worthwhile remarking that the subproblem \eqref{LCQP} in the SLA algorithm is a linearly constrained convex quadratic program and thus can be solved efficiently to global optimality \cite{Ben-Tal}. Moreover, for any given $\bv_k^n,$ there holds
$$\|\bv_k\|^2\geq \|\bv_k^n\|^2 + 2 (\bv_k^n)^T(\bv_k-\bv_k^n),~\forall~\bv_k.$$ Therefore, the feasible region of subproblem \eqref{LCQP} is a subset of that of the original problem (P) and the SLA algorithm is a convex (inner) approximation algorithm. This differs from the SDR algorithm which is a convex relaxation algorithm. One potential drawback of the SLA algorithm is that its performance depends on the choice of the initial point $\left\{\bv_k^0\right\}.$ To overcome this, \cite{Tran} proposed to randomly generate many points, scale them, and pick the best one as the initial point.

%
%
%
%

\section{Proposed Global Branch-and-Bound Algorithm}

In this section, we propose a global optimization algorithm for solving problem (P).
Our proposed algorithm is based on the branch-and-bound scheme, which is a general framework
for designing global optimization algorithms \cite{Horst}.

A typical branch-and-bound algorithm (for the minimization problem) is generally based on an enumeration procedure,
which partitions the feasible region to smaller subregions and constructs sub-problems over the partitioned subregions
recursively. In the enumeration procedure, a lower bound
for each subproblem is estimated by solving a relaxation problem. Meanwhile, an upper bound is
obtained from the best known feasible solution generated by the enumeration procedure or by some
other local optimization/heuristic algorithms. A subproblem with a lower bound being larger than the obtained upper bound is
called as an inactive subproblem, which does not contain the global solution of the original problem in its feasible region,
and thus will not be further enumerated. The procedure terminates until all active subproblems have been enumerated, and then an optimal solution within a given relative error
tolerance can be obtained.

The efficiency of a branch-and-bound algorithm considerably relies on the quality of the lower bound as well as the upper bound. The quality of the lower bound depends on the tightness of the convex relaxation and the one of the upper bound depends on the local optimization or heuristic algorithms which are employed to generate the feasible solutions (to the original problem). With better lower and upper bounds, more inactive subproblems can be detected and more unnecessary enumerations
can be avoided.

In the remaining part of this section, we first propose a convex quadratic programming relaxation in Section III-A, which
provides efficient lower bounds in our branch-and-bound algorithm. Then, we present our proposed branch-and-bound algorithm for solving problem (P) in Section III-B. Finally, we show that our proposed branch-and-bound algorithm indeed can find the global solution of problem (P) within any given positive relative error tolerance and analyze its worst-case iteration complexity in Section III-C.


\subsection{New Argument Cut based Relaxation}

As is well known, the SDR \eqref{SDR} is the tightest convex relaxation of problem (P) and its optimal value provides high quality lower bounds on that of problem (P). However, the decision variable in the SDR is lifted to an $N\times N$ matrix, whose dimension
is much larger than the dimension $N$ of the original variable $\bw$. 
In comparison, the subproblem \eqref{LCQP} is a convex linearly constrained quadratic program with $N$ variables, which can be solved much more efficiently than the SDR \eqref{SDR}. However, the subproblem \eqref{LCQP}
is a convex inner approximation of problem (P) but not a convex relaxation, which makes its objective value not suitable for
serving as a lower bound. In the next, we propose a new convex relaxation, which achieves
much higher computational efficiency than the SDR and provides valid lower bounds with satisfactory tightness.

{Without loss of generality, we first change the constraint $|\bh_M^H\bw|^2 \geq 1$ (which is equivalent to $|\bh_M^H\bw| \geq 1$) to $\bh_M^H\bw \geq 1$ in problem (P).} This is because that, for any $\bw$ satisfying $|\bh_M^H\bw|\geq 1,$ we can always find an appropriate $\theta\in \mathbb{R}$ such that $\exp({\textbf{i}\theta})\bh_M^H\bw\geq 1.$ Second, we introduce a new variable $$\bc=\left[c_1,c_2,\ldots,c_{M-1}\right]^T$$ with $c_k=\bh_k^H\bw$,
and represent the constraint $|\bh_k^H\bw|^2\geq 1$ by $|c_k|^2\geq 1$, which is equivalent to $|c_k|\geq 1$.
Then, problem (P) is transformed to the following problem (P'):
\begin{equation}\begin{array}{cl}
\displaystyle \min_{\bw,\,\bc}~&\|\bw\|^2 \tag{P'}\\[5pt]
\mbox{s.t.} & c_k=\bh_k^H\bw,~k=1,...,M-1,\\[5pt]
&|c_k|\geq 1,~k=1,...,M-1,\\[5pt]
&\bh_M^H\bw\geq 1.
\end{array}\end{equation}
In problem (P'), $|c_k|\geq 1~(k=1,...,M-1)$ are the only nonconvex constraints. To develop the branch-and-bound algorithm, we need to relax these nonconvex constraints to convex ones. {It is worthwhile remarking here that we treat the last SNR constraint different from the others in problem (P'). The purpose of doing this is to reduce the number of nonconvex constraints in it and this will make the corresponding relaxation problem more tight and thus improve the efficiency of our proposed branch-and-bound algorithm.}

Let us consider the
set \begin{equation}\label{Cset}\{c_k\in \mathbb{C}\,|\,|c_k|\geq 1\}.\end{equation}
It is simple to see that the convex envelope of set \eqref{Cset} is the whole complex set $\mathbb{C}$. Obviously, it is loose, if {we} directly relax set \eqref{Cset} to its convex envelope, i.e., relax the constraint $|c_k|\geq 1$ to $c_k\in \mathbb{C}$.

Next, we develop a tighter convex relaxation for \eqref{Cset}. To do so, we introduce $x_k=\mathrm{Re}(c_k)$ and $y_k=\mathrm{Im}(c_k)$, and assume that the argument of $c_k$ satisfies $\arg{(c_k)}\in [l_k,u_k]$.  Let

\begin{equation}\label{Dk}\mathcal{D}_{[l_k,u_k]}=\left\{ (x_k,y_k)\,\left|\, \begin{array}{@{}lll}c_k=x_k+y_k\textbf{i},~|c_k| \geq 1\\ \arg{(c_k)}\in [l_k,u_k]\end{array}\right.\right\}\end{equation}

and let
$\textrm{Conv}(\mathcal{D}_{[l_k,u_k]})$ be the convex envelope of the set $\mathcal{D}_{[l_k,u_k]}.$

{Let us first give an illustration on how $\mathcal{D}_{[l_k,u_k]}$ and $\textrm{Conv}(\mathcal{D}_{[l_k,u_k]})$ look like by using Fig. 1 where $[l_k,u_k]=[0,\pi/2].$ In Fig. 1, $$\mathcal{D}_{[0,\pi/2]}=\left\{(x,y)\,|\,x \geq 0,~y\geq 0,~\sqrt{x^2+y^2} \geq 1\right\}$$ is the light blue region outside the arc AB, which is obviously a nonconvex set; and $$\textrm{Conv}(\mathcal{D}_{[0,\pi/2]})= \{ (x,y)\,|\,x \geq 0,~y\geq 0,~x+y \geq 1\}$$ is the light blue region, which is convex and is determined by three lines $\overline{\text{OA}}$, $\overline{\text{OB}}$, and $\overline{\text{AB}}$.}


The following proposition characterizes $\textrm{Conv}(\mathcal{D}_{[l_k,u_k]})$ in the general case if $u_k-l_k\leq \pi.$
\begin{proposition}\label{prop_envelope}
Suppose $l_k$ and $u_k$ in \eqref{Dk} satisfying $u_k-l_k\leq \pi$. Then {
\begin{equation}\label{convD}
\textrm{Conv}(\mathcal{D}_{[l_k,u_k]})= \left\{ (x,y)\,\left|\, \begin{array}{@{}lll}\sin(l_k)x-\cos(l_k)y \leq 0\\ \sin(u_k)x-\cos(u_k)y \geq 0\\ a_kx+b_ky\geq a_k^2+b_k^2\end{array}\right.\right\}.
\end{equation}
}
where
\begin{equation}\label{akbk}a_k=\frac{\cos(l_k)+\cos(u_k) }{2}~\text{and}~b_k=\frac{\sin(l_k)+\sin(u_k) }{2}.\end{equation}
\end{proposition}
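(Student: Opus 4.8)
The plan is to prove the identity in \eqref{convD} by establishing the two set inclusions separately. Denote by $\mathcal{C}_k$ the convex set on the right-hand side of \eqref{convD}, defined by the three linear inequalities, and recall that $\mathrm{Conv}(\mathcal{D}_{[l_k,u_k]})$ is by definition the smallest convex set containing $\mathcal{D}_{[l_k,u_k]}$. First I would verify that $\mathcal{D}_{[l_k,u_k]}\subseteq \mathcal{C}_k$, which combined with the convexity of $\mathcal{C}_k$ immediately yields $\mathrm{Conv}(\mathcal{D}_{[l_k,u_k]})\subseteq \mathcal{C}_k$. The reverse inclusion $\mathcal{C}_k\subseteq \mathrm{Conv}(\mathcal{D}_{[l_k,u_k]})$ is the substantive direction and requires showing every point of $\mathcal{C}_k$ is a convex combination of points of $\mathcal{D}_{[l_k,u_k]}$.

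For the forward inclusion, I would take an arbitrary $(x_k,y_k)\in \mathcal{D}_{[l_k,u_k]}$ and check the three inequalities one at a time. Writing the point in polar form as $(x_k,y_k)=(r\cos\phi,r\sin\phi)$ with $r\geq 1$ and $\phi\in[l_k,u_k]$, the first two constraints become statements about the sine of angle differences: $\sin(l_k)x_k-\cos(l_k)y_k = r\sin(l_k-\phi)\leq 0$ since $\phi\geq l_k$, and similarly $\sin(u_k)x_k-\cos(u_k)y_k = r\sin(\phi-u_k)\cdot(-1)$ rewritten to be $\geq 0$ since $\phi\leq u_k$; the condition $u_k-l_k\leq\pi$ is exactly what keeps these angle differences within $[-\pi,\pi]$ so that the sign of the sine is controlled. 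The third inequality $a_kx_k+b_ky_k\geq a_k^2+b_k^2$ is the one that encodes the boundary line $\overline{\text{AB}}$; using the half-angle-type identities for $a_k,b_k$ in \eqref{akbk}, I would reduce it to showing $r\cos(\phi-m_k)\geq \cos((u_k-l_k)/2)$ where $m_k=(l_k+u_k)/2$ is the midpoint angle, which holds because $r\geq 1$ and $|\phi-m_k|\leq (u_k-l_k)/2$.

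For the reverse inclusion, the key geometric picture is that $\mathcal{C}_k$ is the region bounded by the two rays from the origin at angles $l_k$ and $u_k$ and by the chord line $\overline{\text{AB}}$ through the points $A=(\cos l_k,\sin l_k)$ and $B=(\cos u_k,\sin u_k)$ on the unit circle. I would argue that any point $p\in\mathcal{C}_k$ can be expressed as a convex combination of points in $\mathcal{D}_{[l_k,u_k]}$: the two extreme rays and the portion of the arc lying beyond the chord all belong to $\mathcal{D}_{[l_k,u_k]}$, so it suffices to show that the triangular/wedge region cut off by the chord is filled by segments joining points of $\mathcal{D}_{[l_k,u_k]}$. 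Concretely, for a given $p\in\mathcal{C}_k$ I would exhibit a line through $p$ whose two intersection points with the boundary of $\mathcal{C}_k$ both lie on the two rays (which are subsets of $\mathcal{D}_{[l_k,u_k]}$, since points on the rays with $r\geq 1$ satisfy $|c_k|\geq 1$), or more carefully handle points near the chord by noting the chord endpoints $A,B$ themselves lie on the unit circle and hence in $\mathcal{D}_{[l_k,u_k]}$.

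The main obstacle I expect is the reverse inclusion near the chord $\overline{\text{AB}}$: one must confirm that every point of $\mathcal{C}_k$ satisfying the chord inequality with equality or near-equality is genuinely recoverable as a convex combination of feasible points, and that no point of $\mathcal{C}_k$ is \emph{strictly inside} the unit disk in a way that would make it unreachable. The condition $u_k-l_k\leq\pi$ is essential here as well, since it guarantees the chord $\overline{\text{AB}}$ lies on the far side of the origin relative to the arc and that the wedge between the two rays is a proper convex cone; without it the convex hull would wrap around and the three-inequality description would fail. I would therefore treat the structural role of $u_k-l_k\leq\pi$ carefully, verifying that $A$ and $B$ are the two extreme points where the boundary transitions from ray to chord, and that the supporting line $\overline{\text{AB}}$ is exactly tangent to the convex hull at no interior arc point.
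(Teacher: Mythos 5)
Your proposal is correct, but it is structured quite differently from the paper's own proof. The paper's argument is a terse geometric identification: it names the two extreme points $\mathrm{A}=(\cos(u_k),\sin(u_k))$ and $\mathrm{B}=(\cos(l_k),\sin(l_k))$, asserts that it suffices to determine the three boundary lines $\overline{\text{OA}}$, $\overline{\text{OB}}$, $\overline{\text{AB}}$, and writes down their equations — the set equality itself is read off from the picture (Fig.~1) rather than verified. You instead prove both inclusions explicitly: the forward direction $\mathcal{D}_{[l_k,u_k]}\subseteq\mathcal{C}_k$ via polar coordinates (where the identities $a_k=\cos(\tfrac{l_k+u_k}{2})\cos(\tfrac{u_k-l_k}{2})$, $b_k=\sin(\tfrac{l_k+u_k}{2})\cos(\tfrac{u_k-l_k}{2})$ reduce the chord inequality to $r\cos(\phi-m_k)\geq\cos(\tfrac{u_k-l_k}{2})$), and the reverse direction by exhibiting each point of $\mathcal{C}_k$ as a convex combination of points of $\mathcal{D}_{[l_k,u_k]}$. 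What your route buys is a self-contained justification of exactly the facts the paper takes for granted — in particular why $u_k-l_k\leq\pi$ is needed for the wedge to be the convex cone of arguments in $[l_k,u_k]$, and why no point of $\mathcal{C}_k$ fails to be reachable; the paper's route buys brevity. Two small refinements you should make when writing it up: (i) in the reverse inclusion, the clean construction for a point $p\in\mathcal{C}_k$ with $|p|<1$ is the line through $p$ \emph{parallel to} $\overline{\text{AB}}$, which meets the unit circle at two points whose arguments lie in $[l_k,u_k]$ and which straddle $p$ — this is more robust than trying to hit the two rays, since near the chord the ray intersections can have $r<1$; and (ii) when you divide the chord inequality by $\cos(\tfrac{u_k-l_k}{2})$ you should note the degenerate case $u_k-l_k=\pi$, where $a_k=b_k=0$ and the third constraint is vacuous (as the paper's footnote also observes).
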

\begin{proof}
{It is simple to see that the two extreme points of the set $\textrm{Conv}(\mathcal{D}_{[l_k,u_k]})$ are $(\cos(l_k),\sin(l_k))$ and $(\cos(u_k),\sin(u_k))$. With a little abuse of notation, we denote these two points by B and A (which reduce to the points B and A in Fig. 1 where $[l_k,u_k]=[0,\pi/2]$). To prove the proposition, it is sufficient to determine the three lines $\overline{\text{OA}}$, $\overline{\text{OB}}$, and $\overline{\text{AB}}.$ The equations of these three lines are given by $\sin (l_k) x- \cos (l_k) y = 0,$ $\sin (u_k) x- \cos (u_k) y = 0,$ and $a_k x+b_k y = a_k^2+b^2_k,$ respectively, where $a_k$ and $b_k$ are given in \eqref{akbk}. The proof is completed.}
%
%
\end{proof}



The linear inequalities in \eqref{convD} are introduced due to the
argument constraints. Hence, we name these linear inequalities the argument cuts in this paper.
\begin{definition}[Argument Cuts]\label{definition-argcut}
  The linear inequalities in $\textrm{Conv}(\mathcal{D}_{[l_k,u_k]})$ are called the argument cuts.
\end{definition}
To the best of our knowledge, the argument cuts have not been used in the literatures. With the help of the argument cuts, we are able to develop efficient convex relaxations for problem (P'). More specifically, it follows from Proposition \ref{prop_envelope} that
$$\mathcal{F}_{[l,u]}:=\left\{x+y\textbf{i}\,|\,(x,y)\in \textrm{Conv}(\mathcal{D}_{[l,u]})\right\}$$
is the convex envelope of the nonconvex set
$$\left\{x+y\textbf{i}\,|\,(x,y)\in \mathcal{D}_{[l,u]}\right\}.$$
Assume $\arg (c_k) \in [l_k,u_k]$ for all $k=1,2,\ldots,M-1$ in problem (P'). Then we obtain the following convex argument cut based relaxation (ACR) of problem (P'):
\begin{equation}\label{ACR}\begin{array}{cl}
\displaystyle\min_{\bw,\,\bc} & \|\bw\|^2\\[5pt]
\mbox{s.t.}& c_k=\bh_k^H\bw,~k=1,\dots,M-1,\\[5pt]
& c_k\in \mathcal{F}_{[l_k,u_k]},~k=1,\ldots,M-1,\\[5pt]
& \bh_M^H\bw \geq 1.
\end{array}\end{equation}

%

%
%


\begin{figure}
\centering
\includegraphics[width=7.0cm]{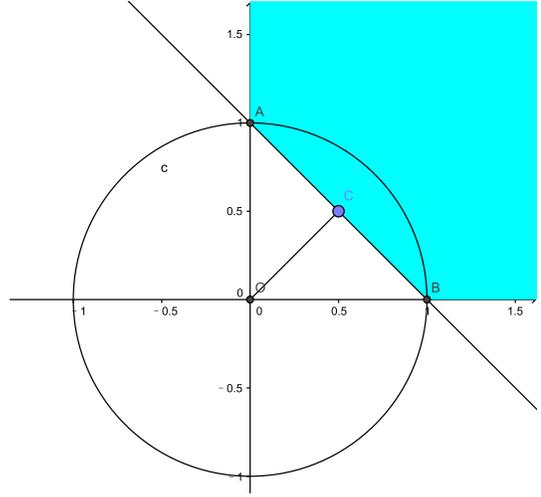}
\centering
\caption{An illustration of the convex envelope $\textrm{Conv}(\mathcal{D}_{[0,\pi/2]})$.}\label{fig-envelope}
\end{figure}




We consider the following two cases: whether or not the width of the interval $[l_k,u_k],$ i.e., $d_k:=u_k-l_k,$ is greater than $\pi.$
\begin{itemize}
  \item [-] Case I: $d_k>\pi.$ In this case, $c_k\in\mathcal{F}_{[l_k,u_k]}$ is equivalent to $c_k\in\mathbb{C}$ and thus the corresponding argument cuts do not take effect. Therefore, we can drop the constraint $c_k\in \mathcal{F}_{[l_k,u_k]}$ in problem \eqref{ACR}.
  \item [-] Case II: $d_k\leq\pi.$ It follows from Proposition \ref{prop_envelope} that $c_k\in \mathcal{F}_{[l_k,u_k]}$
can be represented by at most\footnote{Notice that when $d_k=\pi$, the first two linear constraints in \eqref{convD} are equivalent to each other, and the third linear constraint {becomes} redundant.} three (real) linear inequality constraints. It is simple to see that the argument cuts are effective in this case.
\end{itemize}

To sum up, we know that ACR problem \eqref{ACR} is a linearly constrained convex quadratic program with $2N+2M-2$ (real) variables, at most $2M-1$ (real) linear equality constraints\footnote{Recall that the constraint $\bh_M^H\bw \geq 1$ corresponds a (real) linear inequality constraint and a (real) linear equality constraint.}, and at most $3M-2$ (real) linear inequality constraints. Therefore, ACR problem \eqref{ACR} can be solved efficiently and globally by using the interior-point algorithm within ${\cal{O}}\left(N^3M^{3.5}\right)$ operations \cite[Page 423]{Ben-Tal}. 

%


We conclude this subsection with a tightness measure of the ACR. Let us focus on the case $d_k\leq \pi.$
Note that the inclusion
$$\{x+y\textbf{i}\,|\,(x,y)\in \mathcal{D}_{[l_k,u_k]}\}\supseteq \mathcal{F}_{[l_k,u_k]}$$ generally is not true, which implies that there exist some point $c_k\in\mathcal{F}_{[l_k,u_k]}$ such that $|c_k|<1.$ Therefore, the smallest norm of the points in $\mathcal{F}_{[l_k,u_k]}$ can be used to measure the tightness of the ACR. The following theorem shows that the smallest norm of the points in $\mathcal{F}_{[l_k,u_k]}$ can be computed in a closed form.

\begin{proposition}\label{thm_tightness}
Given any interval $[l_k,u_k]$ with $u_k-l_k\leq \pi$. We have $$\min_{ c_k\in \mathcal{F}_{[l_k,u_k]} } |c_k|= \cos\left(\frac{u_k-l_k}{2}\right).$$
\end{proposition}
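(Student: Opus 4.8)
The plan is to exploit the explicit description of $\textrm{Conv}(\mathcal{D}_{[l_k,u_k]})$ from Proposition \ref{prop_envelope} and reduce the claim to a distance-from-the-origin-to-a-half-plane computation. Writing $c_k=x+y\mathbf{i}$, minimizing $|c_k|$ over $\mathcal{F}_{[l_k,u_k]}$ is the same as minimizing $\sqrt{x^2+y^2}$ over the convex region described by the three inequalities in \eqref{convD}. The key observation I would make is that only the third inequality, the chord cut $a_kx+b_ky\ge a_k^2+b_k^2$, can be active at the minimizer: the two rays $\sin(l_k)x-\cos(l_k)y=0$ and $\sin(u_k)x-\cos(u_k)y=0$ pass through the origin, so they only restrict the angular position of $c_k$, not its distance to the origin.

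First I would record the algebraic identity $a_k^2+b_k^2=\cos^2\!\big(\tfrac{u_k-l_k}{2}\big)$. This follows by expanding $(\cos(l_k)+\cos(u_k))^2+(\sin(l_k)+\sin(u_k))^2=2+2\cos(u_k-l_k)$ and applying the half-angle formula. Equivalently, the sum-to-product identities give $(a_k,b_k)=\cos\!\big(\tfrac{u_k-l_k}{2}\big)\big(\cos\tfrac{l_k+u_k}{2},\,\sin\tfrac{l_k+u_k}{2}\big)$, an expression I will reuse in the feasibility check below.

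Next I would prove the two matching bounds. For the lower bound, every $(x,y)\in\textrm{Conv}(\mathcal{D}_{[l_k,u_k]})$ satisfies $a_kx+b_ky\ge a_k^2+b_k^2$, hence lies in the closed half-plane $H$ whose bounding line is the chord $\overline{\text{AB}}$ and which excludes the origin (since $0<a_k^2+b_k^2$ when $u_k-l_k<\pi$). Therefore $|c_k|$ is at least the distance from the origin to this line, namely $(a_k^2+b_k^2)/\sqrt{a_k^2+b_k^2}=\sqrt{a_k^2+b_k^2}=\cos\!\big(\tfrac{u_k-l_k}{2}\big)$, where the assumption $u_k-l_k\le\pi$ is used to drop the absolute value. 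For the matching upper bound I would exhibit the foot of the perpendicular from the origin to the chord, which is the point $(a_k,b_k)$ itself (the parametrization $t(a_k,b_k)$ meets $a_kx+b_ky=a_k^2+b_k^2$ exactly at $t=1$). Using the sum-to-product expression from the previous step, $(a_k,b_k)$ has modulus $\cos\!\big(\tfrac{u_k-l_k}{2}\big)$ and argument $\tfrac{l_k+u_k}{2}\in[l_k,u_k]$; a short substitution then shows it satisfies both ray inequalities and meets the chord cut with equality, so $(a_k,b_k)\in\mathcal{F}_{[l_k,u_k]}$ and attains the bound.

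The only subtle point, which I would flag explicitly, is the degenerate case $u_k-l_k=\pi$, where $a_k=b_k=0$, the chord cut vanishes, and the distance-to-line argument breaks down by division by zero. In that case the two rays merge into a single line through the origin, $\mathcal{F}_{[l_k,u_k]}$ is a closed half-plane containing the origin, so $\min|c_k|=0=\cos(\pi/2)$ and the formula still holds. Apart from this boundary case I expect no real obstacle; the crux is simply recognizing that the minimum is governed by the single chord cut and then evaluating $\sqrt{a_k^2+b_k^2}$.
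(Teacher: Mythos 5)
Your proof is correct and takes essentially the same route as the paper's: both identify the minimizer as the foot of the perpendicular from the origin to the chord $\overline{\text{AB}}$, namely the point $(a_k,b_k)$, whose norm is $\cos\left(\frac{u_k-l_k}{2}\right)$. Your write-up is in fact more rigorous than the paper's (which argues informally by generalizing from the $[0,\pi/2]$ picture): you supply the matching lower bound via the distance to the half-plane $a_kx+b_ky\geq a_k^2+b_k^2$, verify feasibility of $(a_k,b_k)$ against the two ray inequalities, and correctly dispose of the degenerate case $u_k-l_k=\pi$.
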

\begin{proof}The point that has the smallest norm in a convex set is the projection of the origin onto the corresponding set. We first consider the special case where $[l_k, u_k]=[0,\pi/2],$ as shown in Fig. \ref{fig-envelope}. In this case, the projection of the origin $O$ onto the set $\mathcal{F}_{[0,\pi/2]}$ is the point $C,$ which satisfies $\overline{\text{OC}}\bot \overline{\text{AB}}.$ Notice that the coordinates of $A$ and $B$ in Fig. \ref{fig-envelope} are $(0,1)$ and (1,0), respectively. It is simple to see that the coordinate of the point $C$ is $(0.5, 0.5).$ In the general case of $\mathcal{F}_{[l_k,u_k]},$ the coordinate of the point $c_k$ is $$\left(\frac{\cos(l_k) +\cos(u_k)}{2}, \frac{\sin(l_k) +\sin(u_k)}{2}\right)$$ and its norm is $\cos\left(\frac{u_k-l_k}{2}\right).$ The proof is completed.
%
%
%
\end{proof}

We can see from Proposition \ref{thm_tightness} that: the smaller the width of the interval, the tighter the ACR; as the width of the interval goes to zero, the set $\mathcal{F}_{[l_k,u_k]}$ becomes $\{x+y\textbf{i}\,|\,(x,y)\in \mathcal{D}_{[l_k,u_k]}\}$ and the ACR becomes tight. Hence, an effective approach to tightening the ACR is to reduce the width of the corresponding interval.





\subsection{Proposed Branch-and-Bound Algorithm}
In this subsection, we propose a branch-and-bound algorithm for globally solving problem (P'). The basic idea of the proposed algorithm is to relax the original problem (with appropriate argument constraints) to ACR \eqref{ACR} and gradually tighten the relaxation by reducing the width of the associated intervals.

For ease of presentation, we introduce the following notations. Let $\mathcal{A}=\prod_{k=1}^{M-1} [l_k,u_k]$ and let ACR($\mathcal{A}$) denote the ACR problem defined over the set $\mathcal{A};$ let $\mathcal{P}$ denote the constructed problem list and let $\{\mathcal{A},\bc,L \}$ denote a problem instance from the list $\mathcal{P},$ where $L$ is the optimal value of ACR($\mathcal{A}$) and $\bc$ is its optimal solution; let superscript $t$ denote the iteration number; {let ${U^t}$ denote the upper bound {at the $t$-th iteration}}; and let $\bw^*$ denote the best known feasible solution and let $U^*$ denote the objective value of problem (P') at $\bw^*$.

We are now ready to present the main steps of the proposed branch-and-bound algorithm. 

\textbf{Initialization.} We initialize all intervals $[l^0_k,u^0_k]$ for all $k=1,2,\ldots,M-1$ to be $[0,2\pi],$ i.e., set $\mathcal{A}^0=[0,2\pi]^{M-1}.$ In this case, problem \eqref{ACR} reduces to
\begin{equation}\label{ACR2}\begin{array}{cl}
\displaystyle\min_{\bw} & \|\bw\|^2\\[5pt]
\mbox{s.t.} & \bh_M^H\bw \geq 1.
\end{array}\end{equation} Its optimal solution $(\bw^0, \bc^0)$ and its optimal value $L^0$ are
$$\bw^0=\frac{\bh_M}{\|\bh_M\|^2},~L^0=\frac{1}{\|\bh_M\|^2},$$
$$c_k^0=\frac{\bh_k^H\bh_M}{\|\bh_M\|^2},~k=1,2,\ldots,M-1.$$

\textbf{Termination.} Let $\{\mathcal{A}^t,\bc^t,L^t \}$ be the problem instance that has the least lower bound in the problem list $\mathcal{P}.$ If 
\begin{equation}\label{rgap}({U^{t}}-L^{t})/L^{t}\leq \epsilon,\end{equation} where $\epsilon$ is the preselected relative error tolerance, we terminate the algorithm; otherwise we branch some interval of the above problem instance according to some rule. We can see from \eqref{rgap} that, 
both lower and upper bounds are important to avoid unnecessary branches and enumerations and good lower and upper bounds can significantly improve the efficiency of our proposed algorithm. Below, we shall introduce our branch rule as well as lower and upper bounds one by one.


\textbf{Branch.} Again, let $\{\mathcal{A}^t,\bc^t,L^t \}$ be the problem instance that has the least lower bound in the problem list $\mathcal{P}$ and suppose that the stopping criterion \eqref{rgap} is not satisfied. In this case, we first select the interval that leads to the largest gap to be branched to smaller sub-intervals, i.e., the interval with index
$$k^* = \displaystyle\arg\displaystyle\min_{k\in \{1,...,M-1\}}\left\{|{c}_k^t|\right\};$$ then we partition $\mathcal{A}^t$ into two sets ({denoted as $\mathcal{A}_l^t$ and $\mathcal{A}_r^t$}), with its $k^*$-th interval partitioned into two equal intervals and all the others unchanged. It follows from Proposition \ref{thm_tightness} that the ACR problems defined over the newly obtained two sets, i.e., the two children problems, are tighter than the one defined over the original set $\mathcal{A}^t.$ {Once $\mathcal{A}^t$ has been branched into two sets, the problem instance defined over it will be deleted from the problem list $\mathcal{P}$ and the two children problems will be added into $\mathcal{P}$ if their optimal objective values are less than or equal to the current upper bound.}


\textbf{Lower Bound.} Obviously, for any problem instance $\{\mathcal{A},\bc,L \},$ $L$ is a lower bound of the optimal value of the original problem (P') defined over $\mathcal{A}.$ 
Therefore, the smallest lower bound among all bounds is a lower bound of the optimal value of the original problem. This statement will be formally summarized as Lemma \ref{lemma-bound} and proved in Section III-C.

\textbf{Upper Bound.} An upper bound of the original problem (P') can be obtained by appropriately scaling the solution of any ACR problem instance. More specifically, let $\{\bw^t,\bc^t\}$ be the solution of problem ACR($\mathcal{A}^t$). Then,
\begin{equation}\label{Soperator}\hat\bw^t=\mathrm{Scale}(\bw^t,\bc^{t}):=\frac{\bw^t}{\min\left\{|c_1^t|,|c_2^t|,\ldots,|c_{M-1}^t|,1\right\}}\end{equation}
is feasible to the original problem and $\|\hat \bw^t\|^2$ is an upper bound of the original problem. In our proposed algorithm, the upper bound ${U^t}$ is the best objective values at all of the known feasible solutions {at the $t$-th iteration}.

\begin{figure}[H]
\small{
\textbf{ACR-BB Algorithm for Single-Group Multicast Beamforming Problem (P)}
\begin{algorithmic}[1]
\STATE \textbf{input:} An instance of problem (P), and a relative error tolerance $\epsilon>0.$
\STATE Initialize $\mathcal{P}=\emptyset,$ $\mathcal{A}^0=\prod_{k=1}^{M-1}[l^0_k,u^0_k]=[0,2\pi]^{M-1},$ and set $t=0.$ // \texttt{Initialization.}
\STATE Solve ACR($\mathcal{A}^0$) for its optimal solution $(\bw^0,\bc^0)$ and its optimal value $L^0.$
\STATE Compute $\hat{\bw}^0=\mathrm{Scale}(\bw^0,\bc^0),$ where the operator $\mathrm{Scale}(\cdot,\cdot)$ is defined in \eqref{Soperator}.
\STATE Set ${U^{0}}=\|\hat{\bw}^0\|^2$ and $\bw^{*}=\hat{\bw}^0.$ // \texttt{Initial Upper Bound and Optimal Solution.}
\STATE Add $\{ \mathcal{A}^0,\bc^0,L^0 \}$ into the problem list $\mathcal{P}$.
\LOOP
\STATE Set $t\leftarrow t+1$ and ${U^t=U^{t-1}}.$
\STATE Choose a problem from $\mathcal{P},$ denoted as $\{\mathcal{A}^t,\bc^t,L^t \},$ such that the bound $L^t$ is the smallest one in $\mathcal{P}.$ \label{line:problem} // \texttt{Lower Bound.}
\STATE Delete the chosen subproblem from $\mathcal{P}.$
\IF {$(U^{t}-L^{t})/L^{t}<\epsilon$}
\STATE return ${U^{*}=U^t}$ and $\bw^{*}$ and terminate the algorithm.\label{line:terminate} // \texttt{Termination.}
\ENDIF
\STATE Set $k^* = \displaystyle\arg\displaystyle\min_{k\in \{1,...,M-1\}}\left\{|{c}_k^t|\right\}$ 
and $z^t_{k^*}=\frac{1}{2}(l_{k^*}^t+u_{k^*}^t)$.\label{line:k*}
\STATE Branch $\mathcal{A}^t$ into two sets $\mathcal{A}_{l}^t=\{ \bm{\theta} \in \mathcal{A}^t\,|\,\theta_{k^*}\leq z^t_{k^*} \}$ and
$\mathcal{A}_{r}^t=\{ \bm{\theta} \in \mathcal{A}^t\,|\,\theta_{k^*}\geq z^t_{k^*} \},$ where $\theta_{k^*}$ is the $k^*$-th component of $\bm{\theta}\in\mathbb{R}^{M-1}.$ // \texttt{Branch.}
\STATE Solve ACR($\mathcal{A}_l^t$) for its optimal solution $(\bw_{l}^t,\bc_{l}^t)$ and its optimal value $L_{l}^t$.
\STATE Compute $\hat{\bw}_l^t=\mathrm{Scale}(\bw_l^t,\bc_{l}^t),$ where $\mathrm{Scale}(\cdot,\cdot)$ is defined in \eqref{Soperator}.
\IF {$L_{l}^t\leq {U^{t}}$}
\STATE add $\{ \mathcal{A}_l^t,\bc_l^t,L_l^t \}$ into $\mathcal{P}$.\label{add1}
\ENDIF
\IF{${U^{t}}>\|\hat{\bw}_{l}^t\|^2$}
\STATE set ${U^{t}}=\|\hat{\bw}_{l}^t\|^2$ and $\bw^{*}=\hat{\bw}_{l}^t$. // \texttt{Update Upper Bound and Optimal Solution.}
\ENDIF
\STATE Solve ACR($\mathcal{A}_r^t$) for its optimal solution $(\bw_{r}^t,\bc_{r}^t)$ and its optimal value $L_{r}^t$.
\STATE Compute $\hat{\bw}_r^t=\mathrm{Scale}(\bw_r^t,\bc_{r}^t),$ where $\mathrm{Scale}(\cdot,\cdot)$ is defined in \eqref{Soperator}.
\IF {$L_{r}^t\leq {U^{t}}$}
\STATE add $\{ \mathcal{A}_r^t,\bc_r^t,L_r^t \}$ into $\mathcal{P}$.\label{add2}
\ENDIF
\IF{${U^{t}}>\|\hat{\bw}_{r}^t\|^2$}
\STATE set ${U^{t}}=\|\hat{\bw}_{r}^t\|^2$ and $\bw^{*}=\hat{\bw}_{r}^t$. // \texttt{Update Upper Bound and Optimal Solution.}
\ENDIF
\ENDLOOP
\end{algorithmic}}
\label{alg_wp}
\end{figure}

By judiciously combining the above main steps, we can obtain our proposed branch-and-bound algorithm for solving problem (P) (equivalent to problem (P')). The pseudo-code of our proposed algorithm can be found below. We will call the algorithm ACR-BB for short from now on. 
To make the ACR-BB algorithm more clear, an illustration on how it works is given in Appendix \ref{app-example}. 

We emphasize again that both lower and upper bounds play important roles in the efficiency of the proposed ACR-BB algorithm, because good lower and upper bounds can effectively detect inactive subproblems and avoid unnecessary branches and enumerations. Lines 9 and 11 of the proposed ACR-BB algorithm ensure that it will never branch and enumerate subproblems with the lower bound $L$ being greater than or equal to the upper bound ${U^t}.$ The efficiency of the proposed ACR-BB algorithm will be shown in Section IV.

\subsection{Global Convergence and Worst-Case Iteration Complexity}
In this subsection, we present some theoretical results of our proposed ACR-BB algorithm. We first define the $\epsilon$-optimal solution of problem (P).

\begin{definition}[$\epsilon$-Optimal Solution]
  Given any $\epsilon>0,$ a feasible point $\bw$ is called an $\epsilon$-optimal solution of problem (P) if it satisfies
\begin{equation}
  \frac{\|\bw\|^2-\nu^*}{\nu^*} \leq \epsilon,
\end{equation}where $\nu^*$ is the optimal value of problem (P).
\end{definition}


The following Lemma \ref{lemma-bound} shows that the sequence $\left\{L^t\right\}$ generated by the ACR-BB algorithm is a lower bound of the optimal value of the original problem.

\begin{lemma}\label{lemma-bound}
  For any given instance of problem (P), let $\nu^*$ be its optimal value. Then, we have
  \begin{equation*}\label{Ltnu}0 < L^t \leq \nu^*,~\forall~t\geq 1.\end{equation*}
\end{lemma}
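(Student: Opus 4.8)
The plan is to establish the two inequalities $0 < L^t$ and $L^t \leq \nu^*$ separately, relying on the fact that each $L^t$ is the optimal value of an ACR problem defined over some product of intervals $\mathcal{A}^t$, and that this ACR is a genuine \emph{convex relaxation} of problem (P') restricted to the argument constraints $\arg(c_k) \in [l_k^t, u_k^t]$. The key structural fact, proved in Proposition \ref{prop_envelope}, is that $\mathcal{F}_{[l_k,u_k]}$ is the convex envelope of $\{x + y\mathbf{i} \,|\, (x,y) \in \mathcal{D}_{[l_k,u_k]}\}$, so the feasible region of each ACR\emph{contains} the corresponding restricted feasible region of (P'). This inclusion is the engine behind the upper bound $L^t \leq \nu^*$.

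For the lower bound $L^t \leq \nu^*$, the first step is to observe that the branching procedure partitions the initial argument box $\mathcal{A}^0 = [0,2\pi]^{M-1}$ into subboxes, and that problem (P') is equivalent to the union (over all subboxes in any partition) of its restrictions to each subbox; in particular every feasible $\bw$ of (P') induces arguments $\arg(c_k)$ lying in \emph{some} subbox. Since the global optimal solution $\bw^*$ of (P') (with value $\nu^*$) is feasible, its induced point $\bc^*$ satisfies $c_k^* \in \{x+y\mathbf{i}\,|\,(x,y)\in \mathcal{D}_{[l_k,u_k]}\} \subseteq \mathcal{F}_{[l_k,u_k]}$ for whichever subbox $\mathcal{A}$ contains its arguments, so $(\bw^*,\bc^*)$ is feasible for ACR$(\mathcal{A})$ and hence the optimal value of that particular ACR is at most $\nu^*$. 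The crux is then to argue that the smallest lower bound $L^t$ maintained by the algorithm never exceeds $\nu^*$: because the algorithm only discards a subproblem when its lower bound exceeds the current upper bound $U^t \geq \nu^*$ (Lines 17 and 25 add children only if $L \leq U^t$), the subbox containing the global solution, or some ancestor/descendant of it still in the list $\mathcal{P}$, always has value $\leq \nu^*$. Therefore the minimum $L^t$ over the list $\mathcal{P}$ satisfies $L^t \leq \nu^*$.

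For the strict positivity $0 < L^t$, the plan is to note that $L^t$ is the optimal value of a minimization of $\|\bw\|^2 \geq 0$ over a nonempty feasible set, and to rule out $L^t = 0$. If $L^t = 0$ then the optimal $\bw$ would be $\bw = \mathbf{0}$, forcing $c_k = \bh_k^H \mathbf{0} = 0$ and also $\bh_M^H \mathbf{0} = 0$; but the ACR retains the hard constraint $\bh_M^H \bw \geq 1 > 0$, which $\bw = \mathbf{0}$ violates. Hence $\bw = \mathbf{0}$ is infeasible for every ACR$(\mathcal{A})$, the objective is strictly positive on the feasible set, and since the feasible set is closed with a coercive objective the infimum is attained at some $\bw \neq \mathbf{0}$, giving $L^t > 0$. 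One should also confirm feasibility (nonemptiness) of each ACR so that $L^t$ is well defined and finite; this follows because $\bh_M \neq \mathbf{0}$ guarantees $\bw = \bh_M/\|\bh_M\|^2$ satisfies $\bh_M^H\bw = 1$, and the argument-cut constraints $c_k \in \mathcal{F}_{[l_k,u_k]}$ can be met since $\mathcal{F}_{[l_k,u_k]}$ is a nonempty convex set (it contains the extreme points $(\cos l_k,\sin l_k)$ and $(\cos u_k,\sin u_k)$).

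The main obstacle I expect is the bookkeeping in the lower-bound argument: one must verify that at \emph{every} iteration $t$ there remains in the list $\mathcal{P}$ a subproblem whose feasible region contains the global minimizer's argument data, so that its value $\leq \nu^*$ is present when taking the minimum $L^t$. This requires checking that the pruning steps never remove the ``relevant'' subproblem — which holds precisely because a subproblem is pruned only when its lower bound exceeds the upper bound $U^t$, and the relevant subproblem has value $\leq \nu^* \leq U^t$ and so is never pruned. Making this invariant precise (e.g.\ as an induction on $t$ showing $\min_{\mathcal{P}} L \leq \nu^*$ is preserved under branching and pruning) is the delicate part; the positivity and nonemptiness claims are comparatively routine.
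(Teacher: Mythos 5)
Your proposal is correct and follows essentially the same route as the paper's own proof: locate the subbox of the partition containing the global solution's arguments, use the relaxation inclusion $\mathcal{D}_{[l_k,u_k]}\subseteq\textrm{Conv}(\mathcal{D}_{[l_k,u_k]})$ to bound that subproblem's value by $\nu^*$, take the minimum over $\mathcal{P}$, and get positivity from the retained constraint $\bh_M^H\bw\geq 1$. You are in fact somewhat more careful than the paper on two points it leaves implicit — that pruning never removes the relevant subproblem (since its bound is at most $\nu^*\leq U^t$) and that the positive infimum is attained via coercivity — but these are refinements of the same argument, not a different one.
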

\begin{proof}
The optimal value of problem \eqref{ACR} is always positive because zero is not a feasible solution to it. Hence, $L^t>0$ for all $t\geq 1.$
Next, we show $L^t \leq \nu^*$ for all $t\geq 1.$ 
At the beginning of the $t$-th iteration of the ACR-BB algorithm, the set $\mathcal{A}^0$ has been partitioned into $t$ small subsets, and {the global solution of the given problem instance} must lie in one of them (we denote the corresponding subset as $\mathcal{A}^*$). Then, the optimal value of problem ACR($\mathcal{A}^*$) must be less than or equal to $\nu^*.$ Moreover, since $L^t$ is the smallest lower bound of all subproblems in $\mathcal{P}$ at the $t$-th iteration, it follows that $L^t$ is less than or equal to the optimal value of problem ACR($\mathcal{A}^*$). Therefore, we get $L^t\leq \nu^*$ for all $t\geq 1.$
%
\end{proof}

From lemma \ref{lemma-bound}, we immediately get
$$\frac{{U^t}-L^t}{L^t}\geq \frac{{U^t}-\nu^*}{\nu^*},~\forall~t\geq 1.$$
This further shows that, if the ACR-BB algorithm terminates, i.e., condition \eqref{rgap} is satisfied, the returned solution $\bw^*$ by the algorithm is an $\epsilon$-optimal solution of problem (P).

The following Lemma \ref{lemma-terminate} shows that the ACR-BB algorithm will terminate.

\begin{lemma}\label{lemma-terminate}
For any given instance of problem (P), let $\{ \mathcal{A}^t,\bc^t,L^t \}$ be the subproblem chosen in Line \ref{line:problem} and let $k^*$ be the index chosen in Line \ref{line:k*}. If \begin{equation}\label{uldifference}u^t_{k^*}-l^t_{k^*} \leq 2\delta,\end{equation}
where
\begin{equation}\label{delta}\delta=\arccos \left(\frac{1}{\sqrt{1+\epsilon}}\right),\end{equation}
then condition \eqref{rgap} holds true and the ACR-BB algorithm will terminate in Line \ref{line:terminate}. 
%
\end{lemma}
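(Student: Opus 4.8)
The plan is to reduce the termination condition \eqref{rgap} to a single lower bound on $|c_{k^*}^t|$, the smallest modulus among the components of the relaxed solution $\bc^t$, and then to obtain that bound directly from the tightness estimate of Proposition \ref{thm_tightness}. First I would record the two facts linking the bounds $L^t$ and $U^t$ to the chosen subproblem. Since $(\bw^t,\bc^t)$ solves ACR($\mathcal{A}^t$), its optimal value is $L^t=\|\bw^t\|^2$. For the upper bound, I would note that at the (earlier) iteration when $\{\mathcal{A}^t,\bc^t,L^t\}$ was inserted into $\mathcal{P}$, the algorithm formed the feasible point $\hat{\bw}^t=\mathrm{Scale}(\bw^t,\bc^t)$ and used $\|\hat{\bw}^t\|^2$ to refresh the upper bound; because the upper bound is never increased afterwards, at the current iteration we still have $U^t\le\|\hat{\bw}^t\|^2$.

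Next I would make the scaling explicit. Writing $m=\min\{|c_1^t|,\ldots,|c_{M-1}^t|,1\}$, the definition \eqref{Soperator} gives $\|\hat{\bw}^t\|^2=\|\bw^t\|^2/m^2=L^t/m^2$, and since $L^t>0$ by Lemma \ref{lemma-bound},
$$\frac{U^t-L^t}{L^t}\le\frac{\|\hat{\bw}^t\|^2-L^t}{L^t}=\frac{1}{m^2}-1.$$
Because $k^*$ is selected in Line \ref{line:k*} as the index minimizing $|c_k^t|$, we have $m=\min\{|c_{k^*}^t|,1\}$, so it suffices to show $m\ge 1/\sqrt{1+\epsilon}$, which yields $1/m^2-1\le\epsilon$ and hence \eqref{rgap}. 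The subcase $|c_{k^*}^t|\ge 1$ gives $m=1$ and is immediate (as $\epsilon>0$), so the only case to analyze is $|c_{k^*}^t|<1$, where $m=|c_{k^*}^t|$.

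The remaining estimate is where Proposition \ref{thm_tightness} enters. By \eqref{delta}, $\delta=\arccos(1/\sqrt{1+\epsilon})\in(0,\pi/2)$, so the hypothesis \eqref{uldifference} gives $u_{k^*}^t-l_{k^*}^t\le 2\delta<\pi$. This places the $k^*$-th interval in Case II, meaning the corresponding constraint of ACR($\mathcal{A}^t$) is exactly $c_{k^*}^t\in\mathcal{F}_{[l_{k^*}^t,u_{k^*}^t]}$; feasibility of $\bc^t$ together with Proposition \ref{thm_tightness} then yields $|c_{k^*}^t|\ge\cos\bigl((u_{k^*}^t-l_{k^*}^t)/2\bigr)$. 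Using \eqref{uldifference} and the fact that cosine is decreasing on $[0,\pi/2]$,
$$|c_{k^*}^t|\ge\cos\left(\frac{u_{k^*}^t-l_{k^*}^t}{2}\right)\ge\cos(\delta)=\frac{1}{\sqrt{1+\epsilon}},$$
which is the desired bound on $m$ and establishes \eqref{rgap}.

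I expect the main obstacle to be the bookkeeping behind the second fact, namely verifying $U^t\le\|\hat{\bw}^t\|^2$: one must argue that the scaled solution of the currently-chosen subproblem was generated and fed into the upper-bound update when that subproblem entered $\mathcal{P}$, and that the upper bound is monotonically non-increasing across iterations, so that no later iterate could have pushed it above $\|\hat{\bw}^t\|^2$. Everything after that reduces to the closed-form tightness estimate of Proposition \ref{thm_tightness} and elementary monotonicity of the cosine, so I would keep those steps brief.
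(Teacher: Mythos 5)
Your proof is correct and follows essentially the same route as the paper's: Proposition \ref{thm_tightness} lower-bounds $|c_{k^*}^t|$ by $\cos(\delta)=1/\sqrt{1+\epsilon}$, minimality of $k^*$ extends this to all components, the definition of $\mathrm{Scale}$ then gives $U^t\le\|\hat{\bw}^t\|^2\le (1+\epsilon)L^t$, and \eqref{rgap} follows. Your extra care about the bookkeeping $U^t\le\|\hat{\bw}^t\|^2$ and the check that $2\delta<\pi$ (so the argument cut is actually active) are points the paper glosses over, but they do not change the argument.
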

\begin{proof}
It follows from \eqref{uldifference} and Proposition \ref{thm_tightness} that
$$|c^t_{k^*}| \geq \cos\left(\frac{u^t_{k^*}-l^t_{k^*}}{2}\right) \geq \frac{1}{\sqrt{1+\epsilon}}.$$
By the above inequality and the definition of $k^*$ (see Line \ref{line:k*} of the ACR-BB algorithm), we obtain
\begin{equation}\label{clower}
  |c^t_{i}|\geq |c^t_{k^*}|\geq  \frac{1}{\sqrt{1+\epsilon}},~\forall~i=1,2,\ldots,M-1.
\end{equation}
Let $(\bw^t,\bc^t)$ be the solution of problem ACR($\mathcal{A}^t$). Then, it follows from \eqref{Soperator} and \eqref{clower} that the scaled feasible solution $\hat{\bw}^t=\mathrm{Scale}(\bw^t,\bc^{t})$ satisfies
$$\|\hat{\bw}^t\|^2= \frac{\|\bw^t\|^2}{\min\left\{|c_1^t|^2,|c_2^t|^2,\ldots,|c_{M-1}^t|^2,1\right\}} \leq \|\bw^t\|^2(1+\epsilon).$$ Moreover, since ${U^t}$ is the objective value at the best known feasible solution at the $t$-th iteration, we get \begin{equation}\label{U*upper}{U^t}\leq \|\hat{\bw}^t\|^2 \leq \|\bw^t\|^2(1+\epsilon).\end{equation}
Now, we can use the fact $L^t=\|\bw^t\|^2$ and \eqref{U*upper} to obtain
$$\frac{{U^t}-L^t}{L^t}=\frac{{U^t}-\|\bw^t\|^2}{\|\bw^t\|^2}\leq \frac{\|\bw^t\|^2(1+\epsilon)-\|\bw^t\|^2}{\|\bw^t\|^2}=\epsilon.$$
The proof is completed.
%
%
%
%
\end{proof}

Based on Lemmas \ref{lemma-bound} and \ref{lemma-terminate}, we obtain the main result of this subsection.

\begin{theorem}[Iteration Complexity]\label{thm-complexity}
For any given $\epsilon>0$ and any given instance of problem (P) with $M$ users, the ACR-BB algorithm will return an $\epsilon$-optimal solution of the given instance within at most
\begin{equation}\label{bound}T:=\left\lceil \left(\frac{2\pi}{\delta}\right)^{M-1} \right\rceil+1\end{equation}
iterations, where $\delta$ is defined in \eqref{delta}.


%
\end{theorem}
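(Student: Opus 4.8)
The plan is to combine the two preceding lemmas with a simple volume/packing argument on the branch-and-bound tree. Lemma \ref{lemma-bound} already guarantees that any point returned upon termination is $\epsilon$-optimal, so the only thing left to establish is the quantitative bound $T$ on the number of iterations. The whole difficulty is therefore combinatorial: bounding how many boxes (sets $\mathcal{A}$) the algorithm can possibly create before it is forced to stop.

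First I would establish an invariant on the boxes that are ever created: \emph{every} coordinate interval $[l_k,u_k]$ of \emph{every} such box has width strictly greater than $\delta$. This follows by induction on the branching. The root $\mathcal{A}^0=[0,2\pi]^{M-1}$ trivially satisfies it since $2\pi>\delta$. For the inductive step, observe that a box is branched only when the termination test \eqref{rgap} \emph{fails}, and by the contrapositive of Lemma \ref{lemma-terminate} this forces the branched interval $[l^t_{k^*},u^t_{k^*}]$ to have width $>2\delta$; bisecting it produces two children whose $k^*$-th interval has width $>\delta$, while all remaining intervals are unchanged and already exceed $\delta$ by the induction hypothesis.

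Next I would count the leaves of the tree by volume. The created boxes form a binary tree whose leaves partition the cube $[0,2\pi]^{M-1}$, because each branching splits a box into two halves whose union is the parent. By the invariant, each leaf box has $(M-1)$-dimensional volume $\prod_{k=1}^{M-1}(u_k-l_k)>\delta^{M-1}$, and since the leaves have disjoint interiors and lie in a cube of volume $(2\pi)^{M-1}$, their number $N_{\text{leaf}}$ satisfies $N_{\text{leaf}}\,\delta^{M-1}<(2\pi)^{M-1}$, i.e. $N_{\text{leaf}}<(2\pi/\delta)^{M-1}$. Finally I would translate leaves into iterations: each non-terminating iteration branches exactly one box into two children, so after $b$ branchings the tree has $1+2b$ nodes and hence $1+b$ leaves, which means the total number of iterations $b+1$ equals $N_{\text{leaf}}$. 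Combining, the iteration count is strictly below $(2\pi/\delta)^{M-1}$, hence at most $\lceil(2\pi/\delta)^{M-1}\rceil<T$; the same finite leaf bound also certifies that the algorithm must terminate, since the number of leaves cannot grow without bound.

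The main obstacle is getting the invariant and the leaf-to-iteration bookkeeping exactly right rather than the arithmetic. One must be careful that the branching condition is precisely the contrapositive of Lemma \ref{lemma-terminate}, that pruned children (those not inserted into $\mathcal{P}$ because their lower bound exceeds the current upper bound) are still counted as leaves of the partition, and—most delicately—that the width invariant concerns \emph{all} coordinates whereas the termination test and the bisection each involve only the single coordinate $k^*$. Once these points are pinned down, the packing inequality $N_{\text{leaf}}\,\delta^{M-1}<(2\pi)^{M-1}$ delivers the claimed exponential bound directly, with the ceiling and the additive $+1$ in \eqref{bound} simply absorbing the rounding.
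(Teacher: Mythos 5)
Your proposal is correct and follows essentially the same route as the paper: both arguments rest on Lemma \ref{lemma-terminate} forcing every branched interval to have width greater than $2\delta$ (hence every created box to have all interval widths greater than $\delta$), and then a volume/packing bound of the disjoint boxes inside $[0,2\pi]^{M-1}$ to cap the number of iterations by $T$. The only cosmetic difference is that you run the counting directly over the leaves of the branching tree while the paper phrases it as a contradiction; the substance is identical.
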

\begin{proof}
It follows from Lemmas \ref{lemma-bound} and \ref{lemma-terminate} that, if \eqref{uldifference} is satisfied, the ACR algorithm will terminate and return an $\epsilon$-optimal solution of the given problem instance. To show the theorem, it remains to show that the algorithm will terminate within $T$ iterations, where $T$ is defined in \eqref{bound}. Next, we show this based on the contradiction principle.

Suppose that the algorithm does not terminate within $T$ iterations. This fact, together with Lemma \ref{lemma-terminate}, implies that the interval that is chosen to be partitioned at the $t$-th iteration must satisfy
$u^t_{k^*}-l^t_{k^*} > 2\delta$ for all $t=1,2,\ldots,T.$ 
Then, after the partition, the width of the two sub-intervals $[l_{k^*}^t, z_{k^*}^t]$ and $[z_{k^*}^t, u_{k^*}^t]$ is greater than $\delta.$ 
Based on this, we can conclude that, for each subset $\mathcal{A}=\prod_{k=1}^{M-1}[l_k,u_k]$
partitioned from the original set $\mathcal{A}^0$, there holds $u_k-l_k > \delta$ for all $k=1,...,M-1.$ Hence, the volume of each subset $\mathcal{A}$ is not less than $\delta^{M-1}$ and the total volume of all $T$ subsets is not less than $T\delta^{M-1}.$ Obviously, the volume of $\mathcal{A}^0$ is $(2\pi)^{M-1}.$ By the choice of $T,$ we get
$T\delta^{M-1}>(2\pi)^{M-1},$ which further implies that the total volume of all $T$ subsets is greater than the one of the original set $\mathcal{A}^0.$ This is a contradiction. Hence, the algorithm will terminate within at most $T$ iterations.
%
%
\end{proof}
{Some remarks on Theorem \ref{thm-complexity} are in order. First, from Theorem \ref{thm-complexity}, we can immediately obtain the following convergence result, which shows that both the sequences of the upper bounds and the lower bounds generated by the ARC-BB algorithm with $\epsilon=0$ converge to the optimal objective value of problem (P). Therefore, the ACR-BB algorithm with $\epsilon=0$ indeed is a global algorithm and is able to find the global solution of problem (P).
\begin{corollary}[Global Optimality]
  For any given instance of problem (P), let $\nu^*$ be its optimal value and let $\left\{U^t\right\}$ and $\left\{L^t\right\}$ be the iterates generated by the ARC-BB algorithm with $\epsilon =0$. Then, we have $U^t\rightarrow \nu^*$ and $L^t\rightarrow \nu^*.$
\end{corollary}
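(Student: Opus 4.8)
The plan is to reduce the corollary to showing that the relative gap $(U^t-L^t)/L^t$ tends to zero and then to extract this from Theorem \ref{thm-complexity} by exploiting monotonicity of the two bound sequences. First I would record two monotonicity facts. The sequence $\{L^t\}$ is non-decreasing: at each iteration the instance attaining the least lower bound is removed from $\mathcal{P}$ and replaced by its two children, whose ACR feasible regions are contained in that of the parent (because $\mathcal{F}_{[l',u']}\subseteq\mathcal{F}_{[l,u]}$ whenever $[l',u']\subseteq[l,u]$), so their optimal values are no smaller than $L^t$; since every remaining instance already had a lower bound at least $L^t$, the new minimum satisfies $L^{t+1}\ge L^t$. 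By Lemma \ref{lemma-bound} we also have $L^t\le\nu^*$, so $\{L^t\}$ converges to some $L^*\le\nu^*$. Dually, $\{U^t\}$ is non-increasing because it is only ever updated downward, and each $U^t$ is the squared norm of a point feasible to problem (P), hence $U^t\ge\nu^*$; thus $\{U^t\}$ converges to some $U^*\ge\nu^*$. In particular the gap $U^t-L^t$ is nonnegative and non-increasing in $t$.

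The crucial observation is that the tolerance $\epsilon$ enters the ACR-BB algorithm only through the termination test in Line \ref{line:terminate}, while the selection, branching, bounding, and pruning rules are all independent of $\epsilon$. Consequently the iterates produced by the $\epsilon=0$ run coincide exactly with those of a run using any fixed tolerance $\bar\epsilon>0$, up to the iteration at which the latter stops. Fixing an arbitrary $\bar\epsilon>0$, Theorem \ref{thm-complexity} guarantees that the $\bar\epsilon$-run terminates at some iteration $t_0\le T$ with $T$ as in \eqref{bound}, which by Line \ref{line:terminate} means $(U^{t_0}-L^{t_0})/L^{t_0}\le\bar\epsilon$; since the $\epsilon=0$ run generates the same $U^{t_0}$ and $L^{t_0}$, this inequality holds along the sequence under consideration.

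It then remains to propagate this accuracy to all larger indices. Because $U^t-L^t$ is non-increasing and $L^t$ is non-decreasing, for every $t\ge t_0$ we have $(U^t-L^t)/L^t\le (U^{t_0}-L^{t_0})/L^{t_0}\le\bar\epsilon$. As $\bar\epsilon>0$ was arbitrary, $(U^t-L^t)/L^t\to 0$; and since $L^t\ge L^1>0$ is bounded away from zero, this forces $U^t-L^t\to 0$, so $U^*=L^*$. Combined with $L^*\le\nu^*\le U^*$ from the first paragraph, we conclude $L^*=U^*=\nu^*$, that is, $L^t\to\nu^*$ and $U^t\to\nu^*$.

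I expect the main obstacle to be the bridge in the second paragraph rather than any single estimate: Theorem \ref{thm-complexity} is a finite-termination statement valid only for strictly positive tolerance, so it does not apply verbatim when $\epsilon=0$, and the argument hinges on recognizing that the two runs share identical iterates before termination, together with the gap monotonicity that ``locks in'' any accuracy once it is attained. A secondary point requiring care is the claim that refining an interval can only raise the ACR optimal value, which I would justify through the set inclusion $\mathcal{F}_{[l',u']}\subseteq\mathcal{F}_{[l,u]}$ implied by Proposition \ref{prop_envelope}; one should also dispose of the trivial case in which $\mathcal{P}$ empties or the gap closes exactly at a finite iteration, where the global optimum has already been found and convergence is immediate.
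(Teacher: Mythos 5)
Your proof is correct and follows essentially the same route as the paper's: both reduce the corollary to showing that the relative gap $(U^t-L^t)/L^t$ tends to zero via Theorem \ref{thm-complexity}, then use that $L^t$ is bounded away from zero together with the sandwich $L^t\le\nu^*\le U^t$ from Lemma \ref{lemma-bound} and the feasibility of the point defining $U^t$. The only difference is that you make explicit, via the monotonicity of $\{L^t\}$ and $\{U^t\}$ and the observation that the $\epsilon=0$ run shares its iterates with any $\bar\epsilon$-run up to the latter's termination, a step the paper's proof simply asserts, namely that the accuracy guaranteed by the finite-termination theorem persists for all $t\ge T$.
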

\begin{proof}
  Theorem \ref{thm-complexity} shows that, for any given $\epsilon>0,$ there exists an integer $T\geq 1$ (given in \eqref{bound}) such that $(U^t-L^t)/L^t\leq \epsilon$ for all $t\geq T.$ This statement is equivalent to ${(U^t-L^t)}/{L^t}\rightarrow 0.$ Since the sequence $\left\{L^t\right\}$ is uniformly bounded away from zero for any given instance of problem (P), we further obtain ${U^t-L^t}\rightarrow 0.$ This, together with Lemma \ref{lemma-bound} and the definition of $U^t$, immediately implies the desired results. The proof is completed. \end{proof}

In practice, we need to preselect a positive relative error tolerance $\epsilon$ in our proposed ACR-BB algorithm, as most of iterative optimization algorithms \cite[Chapter 9]{boyd2004convex}.} Theorem \ref{thm-complexity} shows that the total number of iterations for our proposed ACR-BB algorithm to return an $\epsilon$-optimal solution of any instance of problem (P) is exponential with respect to the number of receivers $M.$
%
The iteration complexity of our proposed algorithm seems high at first sight. However, as will be shown in Section IV, its practical iteration complexity is actually significantly less than the worst-case bound in \eqref{bound}. It is also worthwhile remarking that there is no polynomial time algorithm which can globally solve the problem (unless P=NP) \cite{Garey}, because the problem is NP-hard. 

%


\section{Simulation Results}
In this section, we present some numerical simulation results of our proposed ACR-BB algorithm for solving problem (P). More specifically, we first present some simulation results to show the convergence behaviors of the proposed algorithm in Section IV-A. Then, we show the efficiency of the proposed algorithm by comparing it with a state-of-the-art general-purpose global optimization solver called Baron \cite{Tawarmalani,baron2}. Last but not least, we use the proposed algorithm as the benchmark to evaluate the performance of two state-of-the-art algorithms \cite{Sidiropoulos,Tran} for solving the same problem. 
%

In all of our simulations, 
the channel vectors $\bh_k$ are randomly generated according to the distribution $\mathcal{C}\mathcal{N}(\mathbf{0},\bI_N)$ as done in \cite{Sidiropoulos,Tran,wu2013physical}. 
 We performed all numerical experiments on a PC with a 3.40-GHz Intel Core i7-2600 processor with access to 4 GB of RAM. We implemented our proposed ACR-BB algorithm in Matlab 7.10. We use the built-in function ``quadprog'' in Matlab to solve all linearly constrained convex quadratic
{programs}, and use Sedumi \cite{Sedumi} to solve all SDPs.


\subsection{Convergence Behaviors of the ACR-BB Algorithm}

%
%

In this subsection, we generate a problem instance with $(N, M)=(4, 40),$ apply our proposed ACR-BB algorithm to solve it, and study the convergence behaviors of our proposed algorithm, i.e., the convergence behaviors of the lower bounds $\left\{L^t\right\}$ and {the upper bounds $\left\{U^t\right\}.$} 
In this simulation, we set $\epsilon=1$e$-7$ in our proposed algorithm such that the algorithm can find the ``real'' optimal solution when it terminates. We denote the objective value at the returned solution by $\bar \nu$ and define
$$E_1^t=\frac{|L^t-\bar \nu|}{\bar \nu},~E_2^t=\frac{|U^t-\bar \nu|}{\bar \nu},~\text{and}~E_3^t=\frac{|U^t-L^t|}{L^t}.$$
{In the above, $E_1^t$ and $E_2^t$ are the relative errors of the lower bound $L^t$ and the upper bound $U^t$, respectively;
$E_3^t$ is an easily computable upper bound of $E_1^t+E_2^t$. In this paper, we will call all of them relative errors.} The above three relative errors versus the number of iterations are illustrated as Fig. \ref{fig-gap}.
%

\begin{figure}
\centering
\includegraphics[width=8.0cm]{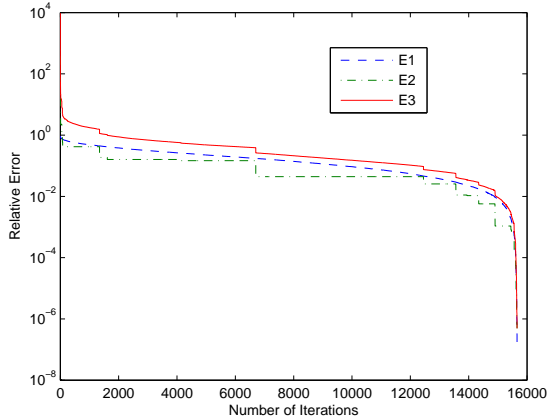}
\centering
\caption{Relative errors versus the number of iterations with $(N, M)=(4, 40)$.}\label{fig-gap}
\end{figure}

It can be seen from Fig. \ref{fig-gap} that $E_2$ converges much faster than $E_1$ and $E_3$ at the beginning stage of our proposed algorithm. For instance, $E_2$ becomes smaller than $1$e$-1$ at the $6702$-th iteration, while $E_1$ and $E_3$ become smaller than $1$e$-1$ at the $9653$-th and $12221$-th iteration, respectively.
The above results imply that an $0.1$-optimal solution has been found at the $6702$-th iteration but it is verified (to be an $0.1$-optimal solution) at the $12221$-th iteration. These results show that:
\begin{itemize}
  \item [-] The upper bounds $\left\{U^t\right\}$ converge faster than the lower bounds $\left\{L^t\right\}$ at the beginning stage of our proposed algorithm; and more importantly,
  \item [-] A low-accuracy solution can be more easier found than verified by our proposed algorithm.
\end{itemize}

We can also see from Fig. \ref{fig-gap} that it takes our proposed ACR-BB algorithm $3339$ iterations to reduce $E_3$ from $1$e$-1$ to $1$e$-3$ and $96$ iterations to reduce it from $1$e$-3$ to $1$e$-6,$ respectively. This shows that the performance of our proposed algorithm in terms of the number of iterations is not sensitive to the choice of the relative error tolerance $\epsilon$ (when it is smaller than $10^{-3}$).

%

In the remaining part of this section, we set $\epsilon=5$e$-3.$


%
%
%
\begin{table}
\centering
\caption{Comparison of our proposed ACR-BB algorithm and Baron} 
\label{table-baron}
\begin{tabular}{|c|rrr|rrr|}
\hline
Instance   & \multicolumn{3}{c|}{ACR-BB}  & \multicolumn{3}{c|}{Baron}  \\
ID  & Value &\# of Iter.  & Time & Value & \# of Iter.  & Time  \\
\hline
01  &1.2344  &90 &0.7 &1.2344 &6297 &327.9 \\
02  &1.1412  &40 &0.2 &1.1412 &2249 &147.4 \\
03  &1.3541  &68 &0.4 &1.3541 &2917 &94.5 \\
04  &1.4251  &49 &0.3 &1.4251 &2847 &261.5 \\
05  &1.4136  &61 &0.4 &1.4136 &1825 &49.8 \\
06  &0.8540  &40 &0.3 &0.8540 &1303 &31.9 \\
07  &0.9777  &48 &0.3 &0.9777 &457 &19.0 \\
08  &5.2469  &1 &0.0 &5.2469 &4305 &58.4 \\
09  &11.9555  &16 &0.1 &11.9555 &5358 &87.1 \\
10  &1.3258  &62 &0.4 &1.3258 &653 &26.5 \\
\hline
Average &2.6928 &47.5 &0.3 &2.6928 &2821.1 &110.4\\
\hline
\end{tabular}
\end{table}

\subsection{Efficiency of the ACR-BB Algorithm}
In this subsection, we study the efficiency of our proposed ACR-BB algorithm for solving problem (P). 

We first compare our proposed algorithm with Baron\footnote{To the best of our knowledge, our proposed ACR-BB algorithm is the first tailored global algorithm for problem (P) and there is no existing global algorithms specially designed for the problem that we can compare our
proposed algorithm with.} \cite{Tawarmalani,baron2}, which is a state-of-the-art general-purpose
global optimization solver and has been widely applied to solve problems arising from various applications \cite{GopalakrishnanA,Konstanteli}. {Although both of our proposed algorithm and Baron lie in the branch-and-bound framework, the difference between them is remarkable. First, the branch strategy in the two algorithms is different, i.e., our proposed algorithm branches in the (argument) ranges of $\arg (c_k)$ but Baron transforms problem (P) into a real (nonconvex) quadratic program by representing the real and imaginary parts of each complex variable with two independent real variables and branches in the ranges of these real variables. 
Second, convex relaxation used in the two algorithms is also different, i.e., our proposed algorithm is based on ACR \eqref{ACR}, which is a linearly constrained quadratic program, but Baron is based on linear programming relaxation \cite{Tawarmalani,baron2}. Since the lower bound based on ACR \eqref{ACR} is generally better than the one based on linear programming relaxation, our proposed algorithm is much more efficient than Baron, as shown below.}
%
%
%

We apply our proposed ACR-BB algorithm and Baron to solve 10 randomly generated problem instances with $(N,M)=(2,8).$ The comparison results are summarized in Table \ref{table-baron}, where the first column shows the {IDs} of the corresponding problem {instances}, the second column shows the objective values and the number of iterations and CPU time results obtained by our proposed algorithm, the third column shows the objective values and the number of iterations and CPU time results obtained by Baron, and the last row shows the average results over the $10$ instances. We can observe from Table \ref{table-baron} that our proposed ACR-BB algorithm performs $366$ times faster than Baron (on average) to find the same solutions. In fact, we have tried to use Baron to solve problem instances with larger $N$ and/or $M.$ Unfortunately, we found that Baron fails to solve all problem instances with $N\geq4$ and $M\geq 8$ within $10$ minutes. These observations demonstrate that our specially designed ACR-BB algorithm achieves significantly higher efficiency on globally solving problem (P) than Baron.


Next, we present more numerical results on applying our proposed ACR-BB algorithm to solve problem (P) with larger $N$ and/or $M$ without comparing it with other general-purpose global optimization solvers. 
All the results to be shown from now on are obtained by averaging over or choosing from $50$ randomly generated problem instances for each pair $(N, M)$. 
Table \ref{table-average-worst} reports the number of iterations and CPU time results, where the first column shows the setup of the problem, the second column shows the average results (over the $50$ problem instances), and the third column shows the worst-case result (among the $50$ problem instances). Figs. \ref{fig-time} and \ref{fig-iteration} plot the average CPU time and the average number of iterations versus the number of users with $N=4.$



\begin{table}
\centering
\caption{Average and worst-case results of number of iterations and CPU time (in seconds) under different setups}\label{table-average-worst}
\begin{tabular}{|r|rr|rr|}
\hline
Setting   & \multicolumn{2}{r|}{Average Performance}  & \multicolumn{2}{r|}{Worst-Case Performance}  \\
$(N,M)$  & \# of Iter.  & Time  & \# of Iter.  & Time  \\
\hline
(2,8)  &47.9     &0.3  &103 &0.7  \\
(2,16) &72.7     &0.5  &147 &1.0  \\
(2,24) &100.6    &0.6  &201 &1.2 \\
(2,32) &136.0    &0.9  &225 &1.4  \\
(2,40) &151.3    &1.0  &298 &2.0 \\
(2,48) &171.7    &1.2  &278 &1.8 \\
(2,56) &199.9    &1.4  &343 &2.3 \\
(2,64) &215.4    &1.5  &341 &2.4 \\
(4,8)  &330.1    &2.8  &886    &8.1   \\
(4,16) &1721.2   &16.6 &3763   &37.8  \\
(4,24) &4492.1   &46.4 &11627  &125.8  \\
(4,32) &8579.6   &98.2  &17442  &216.0 \\
(4,40) &16108.7  &184.9  &34675  &419.9   \\
(6,8)  &1728.6   &21.5   &6388  &83.4  \\
(6,16) &13233.1  &199.9  &36804 &593.3 \\
(6,24) &63691.6  &1159.7 &175041 &3679.3 \\
(8,8)  &2090.0   &30.1   &6301   &95.1  \\
(8,16) &64168.9  &1396.9  &235309 &5786.3 \\
\hline
\end{tabular}
\end{table}

\begin{figure}
\centering
\includegraphics[width=7.4cm]{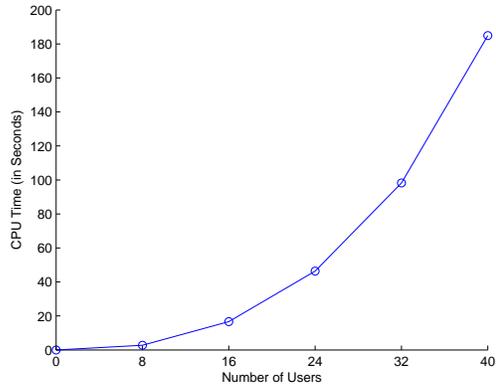}
\centering
\caption{Average CPU time of our proposed ACR-BB algorithm versus the number of users with $N=4$. \label{fig-time}}
\end{figure}

\begin{figure}
\centering
\includegraphics[width=7.4cm]{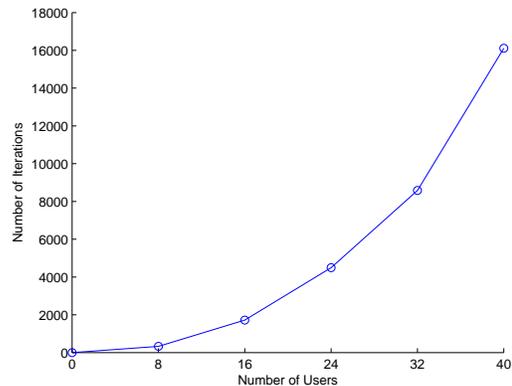}
\centering
\caption{Average number of iterations of our proposed ACR-BB algorithm versus the number of users with $N=4$. \label{fig-iteration}}
\end{figure}

We can observe from the above simulation results (Table \ref{table-average-worst})
that our proposed ACR-BB algorithm is capable of finding the global solution (with the relative error being not greater than $5$e$-3$) within several seconds when $N$ is small and $M$ is not too large. In particular, the proposed algorithm can solve all generated problem instances with $N=2$ and $M$ ranging from $8$ to $64$ within $2.4$ seconds. When $N=4,$ the proposed algorithm can solve problem instances with $M\leq 16$ within $20$ seconds on average and problem instances with $M$ ranging from $24$ to $40$ within $185$ seconds on average. The proposed algorithm can solve all $50$ problem instances with $N=4$ and $M$ ranging from $8$ to $40$ within $7$ minutes. Moreover, we can observe from Fig. \ref{fig-iteration} that the number of iterations of our proposed algorithm is extremely smaller than the worst-case bound in \eqref{bound}. These results demonstrate that our proposed ACR-BB algorithm is highly efficient for problem (P) when $N$ is small and $M$ is not too large. The high efficiency of our proposed ACR-BB algorithm is due to the newly developed ACR \eqref{ACR}, which effectively avoids unnecessary branches that do not contain global optimal solutions.

When $N$ and $M$ (especially $N$) are large, it takes our proposed algorithm relatively long time (and a relatively large number of iterations) to find the global solution. For instance, our proposed algorithm needs $1396.9$ seconds on average to solve problem instances with $(N,M)=(8,16).$ Even though the efficiency of our proposed algorithm is not very high when $N$ and $M$ (especially $N$) are large, our proposed algorithm is still useful in some scenarios. For instance, in the scenario where the channel is stationary/constant and computational efficiency is not a big issue, our proposed algorithm can find the global solution and thus can provide the best performance in terms of transmitting the minimum total power. Another key application of our proposed ACR-BB algorithm is that it provides an important benchmark for performance evaluation of other heuristic/local optimization algorithms for the same problem, as will be done in the next subsection.

\begin{table}
\centering
\caption{Average objective values and relative gaps obtained by SLA and SDR} \label{table-average-two}
\begin{tabular}{|c|rrr|rr|}
\hline
Setting   & \multicolumn{3}{r|}{Average Objective Value}  & \multicolumn{2}{r|}{Average Relative Gap}  \\
$(N,M)$  & ACR-BB  & SLA  & SDR  & SLA  & SDR   \\
\hline
(2,8) &1.629 &1.629 &1.642 &0.02\% &0.79\%  \\
(2,16) &2.800 &2.803 &2.827 &0.10\% &0.96\%  \\
(2,24) &3.389 &3.398 &3.463 &0.26\% &2.20\%  \\
(2,32) &4.457 &4.489 &4.642 &0.73\% &4.16\%  \\
(2,40) &5.720 &5.731 &5.871 &0.20\% &2.65\%  \\
(2,48) &5.679 &5.739 &5.892 &1.06\% &3.77\%  \\
(2,56) &5.461 &5.491 &5.758 &0.55\% &5.44\%  \\
(2,64) &5.914 &5.967 &6.291 &0.89\% &6.37\%  \\
(4,8) &0.514 &0.525 &0.577 &2.01\% &12.17\%  \\
(4,16) &0.837 &0.902 &1.121 &7.83\% &33.97\%  \\
(4,24) &1.132 &1.256 &1.710 &10.95\% &51.03\%  \\
(4,32) &1.328 &1.587 &2.045 &19.48\% &54.04\%  \\
(4,40) &1.525 &1.770 &2.587 &16.08\% &69.64\%  \\
(6,8) &0.334 &0.341 &0.393 &2.30\% &17.75\%  \\
(6,16) &0.531 &0.578 &0.799 &8.90\% &50.39\%  \\
(6,24) &0.667 &0.779 &1.113 &16.77\% &66.99\%  \\
(8,8) &0.246 &0.253 &0.278 &2.95\% &12.93\%  \\
(8,16) &0.376 &0.420 &0.618 &11.68\% &64.11\%  \\
\hline
\end{tabular}
\end{table}

\subsection{Performance Evaluation of Existing Heuristic/Local Optimization Algorithms}
In this subsection, we evaluate the performance of two state-of-the-art algorithms for problem (P), the SDR algorithm \cite{Sidiropoulos} and the SLA algorithm \cite{Tran}, by using our proposed ACR-BB algorithm as the benchmark. {Our evaluation metrics are the objective values and the relative gaps. Note that the objective function in our problem is the total transmission power and hence the smaller the objective value returned by an algorithm, the better the algorithm. The relative gap of an algorithm is defined as $(\nu_A-\bar {\nu} )/\bar{\nu}$, where $\nu_A$ is the objective value returned by the algorithm and $\bar \nu$ is the objective value returned by our proposed algorithm.}

In our simulations, the SDR algorithm and the SLA algorithm are implemented by following the descriptions in \cite{Sidiropoulos} and \cite{Tran}, respectively. In particular, $1000$ points are randomly generated based on the solution of the SDR \eqref{SDR} to obtain the final approximate solution in \cite{Sidiropoulos} and also $1000$ points are randomly generated to obtain a good initial point for the SLA algorithm in \cite{Tran}. Recall that we have generated ${900}$ instances of problem (P) in Table \ref{table-average-worst} and obtained their optimal solution (with the relative errors being not greater than $5$e$-3$) by using our proposed ACR-BB algorithm. We apply the SDR algorithm and the SLA algorithm to solve these instances and evaluate their performance.


%
Table \ref{table-average-two} reports the average objective values and average relative gaps of the SDR algorithm and the SLA algorithm.
We can see from Table \ref{table-average-two} that the average relative gaps of both algorithms are small when $N$ and $M$ (especially $N$) are small and the average relative gaps of the SLA algorithm is smaller than the ones of the SDR algorithm. This shows that, the quality of the returned solutions by both algorithms is high when $N$ and $M$ (especially $N$) are small and the quality of the returned solution by the SLA algorithm is generally better than the one of the returned solution by the SDR algorithm. However, as $N$ and/or $M$ increase, the average relative gaps of both algorithms quickly become large. For instance, when $(N,M)=(4,40)$, the relative gap of the SDR algorithm is $69.64\%$; when $(N,M)=(4,32)$, the relative gap of the SLA algorithm
is $19.48\%$. These results show that the quality of the solutions obtained by the two algorithms are not good when $N$ and/or $M$ are large and the performance of both algorithms degrades quickly as $N$ and/or $M$ increase. 
This makes sense, because both of the algorithms are not global algorithms and they are more likely to get stuck at a local solution when $N$ and/or $M$ are large.


\begin{table}
\centering
\caption{Worst-case relative gaps obtained by SLA and SDR.}\label{table-worst-two}
\begin{tabular}{|c|r|r|}
\hline
$(N,M)$     &SLA &SDR  \\
\hline
(2,8)     &1\% &6\%  \\
(2,16)    &6\% &10\%  \\
(2,24)    &4\% &18\%  \\
(2,32)    &13\% &30\%  \\
(2,40)    &8\% &22\%  \\
(2,48)    &10\% &28\%  \\
(2,56)    &12\% &25\%  \\
(2,64)    &14\% &25\%  \\
(4,8)     &34\% &42\%  \\
(4,16)    &39\% &82\%  \\
(4,24)    &65\% &100\%  \\
(4,32)    &70\% &103\%   \\
(4,40)    &74\% &104\%   \\
(6,8)     &22\% &92\%  \\
(6,16)    &62\% &126\%  \\
(6,24)    &73\% &128\%  \\
(8,8)     &23\% &46\%  \\
(8,16)    &51\% &124\% \\
\hline
\end{tabular}
\end{table}

To study the robustness of the two algorithms, we list the worst-case relative gaps of the two algorithms in Table \ref{table-worst-two}.
We can see from Table \ref{table-worst-two} that the worst-case relative gap of the SLA algorithm exceeds $70\%$ in two setups $(4,40)$ and $(6,24);$ and the worst-case relative gap of the SDR algorithm exceeds $100\%$ in five setups $(4,32),~(4,40),~(6,16),~(6,24),$ and $(8,16).$ These results clearly demonstrate that both of the algorithms are not robust (especially when $N$ and/or $M$ are relatively large) because their performance for some problem instance is bad and the SLA algorithm is more robust than the SDR algorithm.

\begin{table}
\centering
\caption{Probability of obtaining the global solution by SLA and SDR}\label{table-probablity-two}
\begin{tabular}{|c|r|r|}
\hline
$(N,M)$     &SLA &SDR  \\
\hline
(2,8)     &98\% &74\%  \\
(2,16)    &96\% &68\%  \\
(2,24)    &92\% &48\%  \\
(2,32)    &88\% &28\%   \\
(2,40)    &92\% &30\%  \\
(2,48)    &82\% &22\%  \\
(2,56)    &86\% &24\%   \\
(2,64)    &88\% &14\%  \\
(4,8)     &80\% &38\%  \\
(4,16)    &50\% &6\%  \\
(4,24)    &42\% &0\%  \\
(4,32)    &28\% &0\%   \\
(4,40)    &14\% &0\%  \\
(6,8)     &74\% &40\%  \\
(6,16)    &44\% &0\%  \\
(6,24)    &18\% &0\%  \\
(8,8)     &72\% &28\% \\
(8,16)    &42\% &2\% \\
\hline
\end{tabular}
\end{table}

We also test the probability that the two algorithms can find the global solution.
We call that an algorithm finds the global solution of a problem instance if the relative gap is less than or equal to $1$e$-3.$ The probability that an algorithm finds the global solution is defined as the ratio of the number of instances of which it finds the global solution and the total number of instances of which it is applied to solve. The latter is $50$ in our case. The probability that the two algorithms find the global solution is reported in Table \ref{table-probablity-two}. As can be seen from the table, the probability that the SLA algorithm and the SDR algorithm can find the global solution is (relatively) large when both $N$ and $M$ are small and the probability that the SLA algorithm finds the global solution is generally much larger than the one that the SDR algorithm finds the global solution. However, the probability that the SLA algorithm and the SDR algorithm find the global solution decreases quickly as $N$ and/or $M$ increase. These results are intuitive and agree well with our previous simulation results.

In summary, we can make the following conclusions on the performance of the two state-of-the-art algorithms for problem (P), the SLA algorithm and the SDR algorithm, by using our proposed ACR-BB algorithm as benchmark:
\begin{itemize}
  \item [-] both of the algorithms perform well when $N$ and $M$ are small;
  \item [-] the performance of the SLA algorithm is generally better than the one of the SDR algorithm in terms of (average and worst-case) relative gaps, robustness, and probability of finding the global solutions; and
  \item [-] the performance of both of the algorithms degrades quickly as $N$ and/or $M$ increase.
\end{itemize}

{It is worthwhile remarking that both of the SLA algorithm and the SDR algorithm are more efficient than our proposed ACR-BB algorithm. For instance, to solve problem (P) with $N=8$ and $M=16,$ both of the algorithms need less than 0.25 seconds (on average) but it takes our proposed algorithm 1396.9 seconds. However, our proposed ACR-BB algorithm is a global algorithm which is guaranteed to find the global solution of the problem within any given relative error tolerance whereas the SLA algorithm is a local optimization algorithm and the SDR algorithm is an approximation algorithm and both of them cannot guarantee global optimality of their returned solutions.}

\section{Conclusion}
{In this paper, we proposed the ACR-BB algorithm for solving the single-group multicast beamforming problem. The proposed algorithm is guaranteed to find the global solution
of the problem within any given relative error tolerance. To the best of our knowledge, the proposed ACR-BB algorithm is the first specially designed global algorithm for solving the
single-group multicast beamforming problem. The proposed
algorithm is based on the branch-and-bound strategy as well as
a newly developed argument cut technique. Simulation results show that the proposed algorithm significantly outperforms a state-of-the-art general-purpose solver Baron. The high efficiency of the proposed algorithm is mainly due to the use of the argument cut based relaxation, which can potentially be used in other related
algorithms/problems. An important role that the proposed algorithm can play is
that it can be used as a benchmark for performance evaluation
of other heuristic/local optimization algorithms for the same
problem.} 

{\section*{Acknowledgment}
The authors would like to thank three anonymous reviewers for their constructive comments, which significantly improved the quality and presentation of the paper.}

\appendices

\section{An illustration of the ACR-BB algorithm}\label{app-example}
To make the proposed ACR-BB algorithm clear, an illustration of applying it to solve the following problem instance with $(N,M)=(2,3)$ is given:
$$\bh_1=\left[1.3514 +2.5260 \textbf{i}, -0.2938 -1.2571  \textbf{i}\right]^T,$$
$$\bh_2=\left[-0.2248+1.6555 \textbf{i}, -0.8479 -0.8655  \textbf{i}\right]^T,$$
$$\bh_3=\left[-0.7145-1.1201 \textbf{i}, -0.5890 +0.3075  \textbf{i}\right]^T.$$
We set $\epsilon=0.1.$

At the $0$-th iteration, we initialize $\mathcal{A}^0=\left[0,2\pi\right]^2.$ The ACR-BB algorithm solves problem ACR($\mathcal{A}^0$) and obtains its optimal solution $$\bw^0=\left[-0.3238 +0.5076\textbf{i}, -0.2669 -0.1394\textbf{i} \right],$$ $$\bc^0=\left[-1.8166 +0.2446\textbf{i}, -0.6618 -0.3010 \textbf{i}\right],$$$$L^0=0.4532.$$ Then, $\bw^0$ is scaled to obtain the feasible point $$\hat{\bw}^0=\left[-0.4454 +0.6982\textbf{i},-0.3671-0.1917 \textbf{i}\right].$$ Now, $${U^0}=\|\hat{\bw}^0\|^2=0.8573,~\bw^*=\hat{\bw}^0,$$ and $$\mathcal{P}=\left\{\{\mathcal{A}^0,\bc^0,L^0\}\right\}.$$

At the $1$-th iteration, we have $${U^1=U^0}~\text{and}~\left\{\mathcal{A}^1,\bc^1,L^1\right\}=\left\{\mathcal{A}^0,\bc^0,L^0\right\}.$$
Since $$\frac{{U^{1}}-L^{1}}{L^{1}}=\frac{0.8573-0.4532}{0.4532}=0.8917>\epsilon,$$ the ACR-BB algorithm starts executing Line \ref{line:k*}. Recall $|\bc^1|=\left[1.8330, 0.7270\right],$ where $|\cdot|$ denotes the component-wise absolute value operator. 
We have $$k^*=2,~z_2^1=\pi,$$ $$\mathcal{A}_l^1=\left[0,2\pi\right]\times \left[0,\pi\right]~\text{and}~\mathcal{A}_r^1=\left[0,2\pi\right]\times \left[\pi, 2\pi\right].$$ The ACR-BB algorithm solves problems ACR($\mathcal{A}_l^1$) and ACR($\mathcal{A}_r^1$) and obtains their optimal solutions
$$\bw_l^1=\left[-0.2558 +0.4725 \textbf{i},-0.3486 -0.2688 \textbf{i}\right],$$$$\bc_l^1=\left[-1.7747 +0.5097 \textbf{i},-0.6618 +0.0000 \textbf{i}\right],$$$$L_l^1=0.4825,$$
and
$$\bw_r^1=\left[-0.3238 +0.5076 \textbf{i},-0.2669 -0.1394 \textbf{i}\right],$$$$\bc_r^1=\left[-1.8166 +0.2446 \textbf{i},-0.6618 -0.3010 \textbf{i}\right],$$$$L_r^1=0.4532.$$
Then, $\bw_l^1$ and $\bw_r^1$ are scaled to obtain two feasible points $$\hat{\bw}_l^1=\left[-0.3864 +0.7139 \textbf{i},-0.5267 -0.4062 \textbf{i}\right],$$ $$\hat{\bw}_r^1=\left[-0.4454 +0.6982 \textbf{i},-0.3671 -0.1917 \textbf{i}\right],$$ respectively. Now, since $\|\hat{\bw}_l^1\|^2=1.1015>{U^1}$ and {$\|\hat{\bw}_r^1\|^2={0.8573}={U^1},$ ${U^1}$ and $\bw^*$} are not updated; since $L_l^1<{U^1}$ and $L_r^1<{U^1},$ we have
$$\mathcal{P}=\left\{\{\mathcal{A}_l^1,\bc_l^1,L_l^1\}, \{\mathcal{A}_r^1,\bc_r^1,L_r^1\}\right\}.$$

At the $2$-th iteration, we have ${U^2=U^1};$~since $L_r^1<L_l^1,$ we have $$\left\{\mathcal{A}_l^2,\bc_l^2,L_l^2\right\}=\left\{\mathcal{A}_r^1,\bc_r^1,L_r^1\right\}.$$
Since $$\frac{{U^{2}}-L^{2}}{L^{2}}=\frac{0.8573-0.4532}{0.4532}=0.8917>\epsilon,$$ the ACR-BB algorithm starts executing Line \ref{line:k*}. Recall $|\bc^2|=\left[1.8330, 0.7270\right].$ We have $$k^*=2,~z_2^1=\frac{3}{2}\pi,$$ $$\mathcal{A}_l^2=\left[0,2\pi\right]\times \left[\pi, \frac{3}{2}\pi\right]~\text{and}~\mathcal{A}_r^2=\left[0,2\pi\right]\times \left[\frac{3}{2}\pi, 2\pi\right].$$ The ACR-BB algorithm solves problems ACR($\mathcal{A}_l^2$) and ACR($\mathcal{A}_r^2$) and obtains their optimal solutions
$$\bw_l^2=\left[-0.3258 +0.5140 \textbf{i},-0.2539 -0.1364 \textbf{i}\right],$$$$\bc_l^2=\left[-1.8355 +0.2308 \textbf{i},-0.6804 -0.3196 \textbf{i}\right],$$ $$L_l^2=0.4534,$$
and
$$\bw_r^2=\left[-0.5569 +0.4331 \textbf{i},-0.3750 +0.3380 \textbf{i}\right],$$$$\bc_r^2=\left[-1.3117 -0.4494 \textbf{i},0.0186 -0.9814 \textbf{i}\right],$$$$L_r^2=0.7526.$$
Then $\bw_l^2$ and $\bw_r^2$ are scaled to obtain two feasible points $$\hat{\bw}_l^2=\left[-0.4334 +0.6837 \textbf{i},-0.3377 -0.1815 \textbf{i}\right],$$ $$\hat{\bw}_r^2=\left[-0.5674 +0.4413 \textbf{i},-0.3820 +0.3443 \textbf{i}\right],$$ respectively. Now, since $\|\hat{\bw}_l^2\|^2=0.8023<{U^2}$ and $\|\hat{\bw}_r^2\|^2=0.7811<{U^2},$ $U^2$ is updated to $0.7811$ and $\bw^*$ is updated to $\hat{\bw}_r^2$; since $L_l^2<{U^2}$ and $L_r^2<{U^2},$ we have
$$\mathcal{P}=\left\{\{\mathcal{A}_l^1,\bc_l^1,L_l^1\}, \{\mathcal{A}_l^2,\bc_l^2,L_l^2\}, \{\mathcal{A}_r^2,\bc_r^2,L_r^2\}\right\}.$$

At the $3$-th iteration, we have ${U^3=U^2};$ since $L_l^2<L_l^1<L_r^2,$ we have $$\left\{\mathcal{A}^3,\bc^3,L^3\right\}=\left\{\mathcal{A}_l^2,\bc_l^2,L_l^2\right\}.$$
Since $$\frac{{U^{3}}-L^{3}}{L^{3}}=\frac{0.7811-0.4534}{0.4534}=0.7228>\epsilon,$$ the ACR-BB algorithm starts executing Line \ref{line:k*}. Because $|\bc^3|=\left[1.8500, 0.7517\right],$ we have $$k^*=2,~z_2^1=\frac{5}{4}\pi,$$ $$\mathcal{A}_l^3=\left[0,2\pi\right]\times \left[\pi, \frac{5}{4}\pi\right]~\text{and}~\mathcal{A}_r^2=\left[0,2\pi\right]\times \left[\frac{5}{4}\pi, \frac{3}{2}\pi\right].$$ The ACR-BB algorithm solves problems ACR($\mathcal{A}_l^3$) and ACR($\mathcal{A}_r^3$) and obtains their optimal solutions
$$\bw_l^3=\left[-0.3196 +0.5576 \textbf{i},-0.1681 -0.1563 \textbf{i}\right],$$$$\bc_l^3=\left[-1.9876 +0.2034 \textbf{i},-0.8441 -0.3765 \textbf{i}\right],$$$$L_l^3=0.4658,$$
and
$$\bw_r^3=\left[-0.4103 +0.5652 \textbf{i},-0.1372 +0.0230 \textbf{i}\right],$$
$$\bc_r^3=\left[-1.9129 -0.1069 \textbf{i},-0.7071 -0.7071 \textbf{i}\right],$$$$L_r^3=0.5072.$$
Then $\bw_l^3$ and $\bw_r^3$ are scaled to obtain two feasible points $$\hat{\bw}_l^3=\left[-0.3458 +0.6033 \textbf{i},-0.1818 -0.1692 \textbf{i}\right],$$ $$\hat{\bw}_r^3=\left[-0.4103 +0.5652 \textbf{i},-0.1372 +0.0230 \textbf{i}\right],$$ respectively. Now, since $\|\hat{\bw}_l^3\|^2=0.5453<{U^3}$ and $\|\hat{\bw}_r^3\|^2=0.5072<{U^3},$ ${U^3}$ is updated to $0.5072$ and $\bw^*$ is updated to $\hat{\bw}_r^3;$ since $L_l^3<{U^3}$ and ${L_r^3=U^3},$ we have
$$\mathcal{P}=\left\{\{\mathcal{A}_l^1,\bc_l^1,L_l^1\}, \{\mathcal{A}_r^2,\bc_r^2,L_r^2\}, \{\mathcal{A}_l^3,\bc_l^3,L_l^3\}, \{\mathcal{A}_r^3,\bc_r^3,L_r^3\}\right\}.$$

At the $4$-th iteration, {we have $U^4=U^3;$} since $L_l^3<L_l^1<L_r^3<L_r^2,$ we have $$\left\{\mathcal{A}^4,\bc^4,L^4\right\}=\{\mathcal{A}_l^3,\bc_l^3,L_l^3\}.$$
Since $$\frac{{U^{4}}-L^{4}}{L^{4}}=\frac{0.5072-0.4658}{0.4658}=0.0889<\epsilon,$$ the ACR-BB algorithm terminates and return
$$U^*=U^4~\text{and}~\bw^*=\hat{\bw}_r^3.$$

\ifCLASSOPTIONcaptionsoff
  \newpage
\fi

\bibliographystyle{IEEEtran}
\bibliography{Beamformming-r}

\begin{thebibliography}{10}
\providecommand{\url}[1]{#1}
\csname url@samestyle\endcsname
\providecommand{\newblock}{\relax}
\providecommand{\bibinfo}[2]{#2}
\providecommand{\BIBentrySTDinterwordspacing}{\spaceskip=0pt\relax}
\providecommand{\BIBentryALTinterwordstretchfactor}{4}
\providecommand{\BIBentryALTinterwordspacing}{\spaceskip=\fontdimen2\font plus
\BIBentryALTinterwordstretchfactor\fontdimen3\font minus
  \fontdimen4\font\relax}
\providecommand{\BIBforeignlanguage}[2]{{%
\expandafter\ifx\csname l@#1\endcsname\relax
\typeout{** WARNING: IEEEtran.bst: No hyphenation pattern has been}%
\typeout{** loaded for the language `#1'. Using the pattern for}%
\typeout{** the default language instead.}%
\else
\language=\csname l@#1\endcsname
\fi
#2}}
\providecommand{\BIBdecl}{\relax}
\BIBdecl

\bibitem{LTEB}
Qualcomm, ``{LTE} {B}roadcast,''
  https://www.qualcomm.com/documents/lte-broadcast-evolving-and-going-beyond-mobile,~Aug.
  2014.

\bibitem{Lopez}
M.~J. Lopez, ``Multiplexing, scheduling and multicasting strategies for antenna
  arrays in wireless networks,'' Ph.D. dissertation, Dept. of Elect. Eng. and
  Comp. Sci., MIT, Cambridge, MA, 2002.

\bibitem{Sidiropoulos1}
N.~D. Sidiropoulos and T.~N. Davidson, ``Broadcasting with channel state
  information,'' in \emph{Proc. IEEE Sensor Array Multichannel Signal Process.
  Workshop}, Jul. 2004, pp. 489--493.

\bibitem{Sidiropoulos}
N.~D. Sidiropoulos, T.~N. Davidson, and Z.-Q. Luo, ``Transmit beamforming for
  physical-layer multicasting,'' \emph{IEEE Trans. Signal Process.}, vol.~54,
  no.~6, pp. 2239--2251, Jun. 2006.

\bibitem{Garey}
M.~R. Garey and D.~S. Johnsom, \emph{Computers and Intracability: A Guide to
  the Theory of NP-Completeness}.\hskip 1em plus 0.5em minus 0.4em\relax San
  Francisco, CA: Freeman, 1979.

\bibitem{Schubert}
M.~Schubert and H.~Boche, ``Solution of the multiuser downlink beamforming
  problem with individual {SINR} constraints,'' \emph{IEEE Trans. Veh.
  Technol.}, vol.~53, no.~1, pp. 18--28, Jan. 2004.

\bibitem{yu2007transmitter}
W.~Yu and T.~Lan, ``Transmitter optimization for the multi-antenna downlink
  with per-antenna power constraints,'' \emph{IEEE Trans. Signal Process.},
  vol.~55, no.~6, pp. 2646--2660, Jun. 2007.

\bibitem{wiesel2006linear}
A.~Wiesel, Y.~C. Eldar, and S.~Shamai, ``Linear precoding via conic
  optimization for fixed {MIMO} receivers,'' \emph{IEEE Trans. Signal
  Process.}, vol.~54, no.~1, pp. 161--176, Jan. 2006.

\bibitem{dahrouj2010coordinated}
H.~Dahrouj and W.~Yu, ``Coordinated beamforming for the multicell multi-antenna
  wireless system,'' \emph{IEEE Trans. Wireless Commun.}, vol.~9, no.~5, pp.
  1748--1759, May 2010.

\bibitem{liu2013maxsimo}
Y.-F. Liu, M.~Hong, and Y.-H. Dai, ``Max-min fairness linear transceiver design
  problem for a multi-user {SIMO} interference channel is polynomial time
  solvable,'' \emph{IEEE Signal Process. Lett.}, vol.~20, no.~1, pp. 27--30,
  Jan. 2013.

\bibitem{liu11tspbeamforming}
Y.-F. Liu, Y.-H. Dai, and Z.-Q. Luo, ``Coordinated beamforming for {MISO}
  interference channel: Complexity analysis and efficient algorithms,''
  vol.~59, no.~3, pp. 1142--1157, Mar. 2011.

\bibitem{Tran}
L.~N. Tran, M.~F. Hanif, and M.~Juntti, ``A conic quadratic programming
  approach to physical layer multicasting for large-scale antenna arrays,''
  \emph{IEEE Signal Process. Lett.}, vol.~21, no.~1, pp. 114--117, Jan. 2014.

\bibitem{Demir}
O.~T. Demir and T.~E. Tuncer, ``Alternating maximization algorithm for the
  broadcast beamforming,'' in \emph{Proc. Eur. Signal Process. Conf.}, Sept.
  2014, pp. 1915--1919.

\bibitem{Gopalakrishnan}
B.~Gopalakrishnan and N.~D. Sidiropoulos, ``High performance adaptive
  algorithms for single-group multicast beamforming,'' \emph{IEEE Trans. Signal
  Process.}, vol.~63, no.~16, pp. 4373--4384, Aug. 2015.

\bibitem{Lozano}
A.~Lozano, ``Long-term transmit beamforming for wireless multicasting,'' in
  \emph{Proc. IEEE Int. Conf. Acoust. Speech Signal Process.}, Apr. 2007, pp.
  417--420.

\bibitem{Matskani}
E.~Matskani, N.~D. Sidiropoulos, Z.-Q. Luo, and L.~Tassiulas, ``Efficient batch
  and adaptive approximation algorithms for joint multicast beamforming and
  admission control,'' \emph{IEEE Trans. Signal Process.}, vol.~57, no.~12, pp.
  4882--4894, Dec. 2009.

\bibitem{Abdelkader}
A.~Abdelkader, A.~B. Gershman, and N.~D. Sidiropoulos, ``Multiple-antenna
  multicasting using channel orthogonalization and local refinement,''
  \emph{IEEE Trans. Signal Process.}, vol.~58, no.~7, pp. 3922--3927, July
  2010.

\bibitem{kim2011optimal}
I.~H. Kim, D.~J. Love, and S.~Y. Park, ``Optimal and successive approaches to
  signal design for multiple antenna physical layer multicasting,'' \emph{IEEE
  Trans. Commun.}, vol.~59, no.~8, pp. 2316--2327, Jun. 2011.

\bibitem{konar2016fast}
A.~Konar and N.~D. Sidiropoulos, ``A fast approximation algorithm for
  single-group multicast beamforming with large antenna arrays,'' in
  \emph{Proc. IEEE Int. Workshp Signal Process. Adv. Wireless Commun.}, Jul.
  2016, pp. 1--5.

\bibitem{Gopalakrishnan1}
B.~Gopalakrishnan and N.~D. Sidiropoulos, ``Cognitive transmit beamforming from
  binary {CSIT},'' \emph{IEEE Trans. Wireless Commun.}, vol.~14, no.~2, pp.
  895--906, Feb. 2015.

\bibitem{Xu}
J.~Xu and R.~Zhang, ``Energy beamforming with one-bit feedback,'' \emph{IEEE
  Trans. Signal Process.}, vol.~62, no.~20, pp. 5370--5381, Oct. 2014.

\bibitem{noam2013one}
Y.~Noam and A.~J. Goldsmith, ``The one-bit null space learning algorithm and
  its convergence,'' \emph{IEEE Trans. Signal Process.}, vol.~61, no.~24, pp.
  6135--6149, Dec. 2013.

\bibitem{huang2012robust}
Y.~Huang, Q.~Li, W.-K. Ma, and S.~Zhang, ``Robust multicast beamforming for
  spectrum sharing-based cognitive radios,'' \emph{IEEE Trans. Signal
  Process.}, vol.~60, no.~1, pp. 527--533, Jan. 2012.

\bibitem{wen2012rank}
X.~Wen, K.~L. Law, S.~J. Alabed, and M.~Pesavento, ``Rank-two beamforming for
  single-group multicasting networks using {OSTBC},'' in \emph{Proc. IEEE
  Sensor Array and Multichannel Signal Process. Workshop}, Jun. 2012, pp.
  69--72.

\bibitem{schad2012convex}
A.~Schad, K.~L. Law, and M.~Pesavento, ``A convex inner approximation technique
  for rank-two beamforming in multicasting relay networks,'' in \emph{Proc.
  Eur. Signal Process. Conf.}, Aug. 2012, pp. 1369--1373.

\bibitem{wu2012rank}
S.~X. Wu, A.~M.-C. So, and W.-K. Ma, ``Rank-two transmit beamformed {A}lamouti
  space-time coding for physical-layer multicasting,'' in \emph{Proc. IEEE Int.
  Conf. Acoust. Speech Signal Process.}, Mar. 2012, pp. 2793--2796.

\bibitem{wu2013physical}
S.~X. Wu, W.-K. Ma, and A.~M.-C. So, ``Physical-layer multicasting by
  stochastic transmit beamforming and {A}lamouti space-time coding,''
  \emph{IEEE Trans. Signal Process.}, vol.~61, no.~17, pp. 4230--4245, Sept.
  2013.

\bibitem{Luo2010}
Z.-Q. Luo, W.-K. Ma, A.~M.-C. So, Y.~Ye, and S.~Zhang, ``Semidefinite
  relaxation of quadratic optimization problems,'' \emph{IEEE Signal Process.
  Mag.}, vol.~27, no.~3, pp. 20--34, May 2010.

\bibitem{luo2007approximation}
Z.-Q. Luo, N.~D. Sidiropoulos, P.~Tseng, and S.~Zhang, ``Approximation bounds
  for quadratic optimization with homogeneous quadratic constraints,''
  \emph{SIAM J. Optim.}, vol.~18, no.~1, pp. 1--28, 2007.

\bibitem{Tawarmalani}
M.~Tawarmalani and N.~Sahinidis, ``A polyhedral branch-and-cut approach to
  global optimization,'' \emph{Math. Program.}, vol. 103, no.~2, pp. 225--249,
  May 2005.

\bibitem{baron2}
X.~Bao, N.~V. Sahinidis, and M.~Tawarmalani, ``Polyhedral relaxations for
  nonconvex, quadratically constrained quadratic programs,'' \emph{Optim.
  Method Softw.}, vol.~24, no. 4--5, pp. 485--504, 2009.

\bibitem{Ben-Tal}
A.~Ben-Tal and A.~Nemirovski, \emph{Lectures on Modern Convex
  Optimization}.\hskip 1em plus 0.5em minus 0.4em\relax Philadelphia, PA, USA:
  SIAM, 2001.

\bibitem{Horst}
R.~Horst, P.~Pardalos, and N.~Thoai, \emph{Introduction to Global
  Optimization}.\hskip 1em plus 0.5em minus 0.4em\relax Norwell, MA: Springer,
  1995.

\bibitem{boyd2004convex}
S.~Boyd and L.~Vandenberghe, \emph{Convex Optimization}.\hskip 1em plus 0.5em
  minus 0.4em\relax Cambridge University Press, 2004.

\bibitem{Sedumi}
J.~F. Sturm, ``Using sedumi 1.02, a matlab toolbox for optimization over
  symmetric cones,'' \emph{Optim. Method Softw.}, vol. 11-12, no.~1, pp.
  625--653, Nov. 1999.

\bibitem{GopalakrishnanA}
A.~Gopalakrishnan, A.~U. Raghunathan, D.~Nikovski, and L.~T. Biegler, ``Global
  optimization of optimal power flow using a branch \& bound algorithm,'' in
  \emph{Proc. 50th Annu. Allerton Conf. Commun. Control Comput.}, Oct. 2012,
  pp. 609--616.

\bibitem{Konstanteli}
K.~Konstanteli, T.~Cucinotta, K.~Psychas, and T.~A. Varvarigou, ``Elastic
  admission control for federated cloud services,'' \emph{IEEE Trans. Cloud
  Comput.}, vol.~2, no.~3, pp. 348--361, Feb. 2014.

\end{thebibliography}

\end{document}